\DeclareSymbolFont{frenchscript}{OMS}{ztmcm}{m}{n}
\DeclareMathSymbol{\A}{\mathord}{frenchscript}{65}   
\DeclareMathSymbol{\HC}{\mathord}{frenchscript}{72}  
\DeclareMathSymbol{\K}{\mathord}{frenchscript}{75}   
\DeclareMathSymbol{\Sig}{\mathord}{frenchscript}{83} 
\DeclareMathSymbol{\Exp}{\mathord}{frenchscript}{84} 
\newcommand{\T}[1]{\Exp_{{\rm #1}}}              
\newcommand{\RL}{L}                            
\DeclareMathOperator*{\ParOp}{\mid}
\DeclareMathOperator*{\BigParOp}{\Big|}
\newcommand{\ncA}{{\bf noncritA}}
\newcommand{\crA}{{\bf critA}}
\newcommand{\ncB}{{\bf noncritB}}
\newcommand{\crB}{{\bf critB}}
\def\moverlay{\mathpalette\mov@rlay}
\def\mov@rlay#1#2{\leavevmode\vtop{%
   \baselineskip\z@skip \lineskiplimit-\maxdimen
   \ialign{\hfil$\m@th#1##$\hfil\cr#2\crcr}}}
\newcommand{\charfusion}[3][\mathord]{
    #1{\ifx#1\mathop\vphantom{#2}\fi
        \mathpalette\mov@rlay{#2\cr#3}
      }
    \ifx#1\mathop\expandafter\displaylimits\fi}
\newcommand{\dcup}{\charfusion[\mathbin]{\cup}{\mbox{\Large$\cdot$}}}
\newcommand{\nil}{\ensuremath{\textbf{0}}}
\newcommand{\signals}{\ensuremath{\mathrel{\hat{}\!}}}
\newcommand{\CCSs}{CCSS\xspace} 
\def\comesfrom{\@transition\leftarrowfill}
\def\goesto{\@transition\rightarrowfill}
\def\ngoesto{\@transition\nrightarrowfill}
\def\Goesto{\@transition\Rightarrowfill}
\def\nGoesto{\@transition\nRightarrowfill}
\def\xmapsto{\@transition\mapstofill}
\def\nxmapsto{\@transition\nmapstofill}
\def\@transition#1{\@@transition{#1}}
\newbox\@transbox
\newbox\@arrowbox
\newbox\@downbox
\def\@@transition#1#2%
\wd\@transbox{#1}
\@transbox\hbox{$\mathop{\box\@arrowbox}\limits^{\box\@transbox}$}
\def\nrightarrowfill{$\m@th\mathord-\mkern-6mu%
  \cleaders\hbox{$\mkern-2mu\mathord-\mkern-2mu$}\hfill
  \mkern-6mu\mathord\not\mkern-2mu\mathord\rightarrow$}
\def\Rightarrowfill{$\m@th\mathord=\mkern-6mu%
  \cleaders\hbox{$\mkern-2mu\mathord=\mkern-2mu$}\hfill
  \mkern-6mu\mathord\Rightarrow$}
\def\nRightarrowfill{$\m@th\mathord=\mkern-6mu%
  \cleaders\hbox{$\mkern-2mu\mathord=\mkern-2mu$}\hfill
  \mkern-6mu\mathord\not\mathord\Rightarrow$}
\def\mapstofill{$\m@th\mathord\mapstochar\mathord-\mkern-6mu%
  \cleaders\hbox{$\mkern-2mu\mathord-\mkern-2mu$}\hfill
  \mkern-6mu\mathord\rightarrow$}
\def\nmapstofill{$\m@th\mathord\mapstochar\mathord-\mkern-6mu%
  \cleaders\hbox{$\mkern-2mu\mathord-\mkern-2mu$}\hfill
  \mkern-6mu\mathord\not\mkern-2mu\mathord\rightarrow$}
\newcommand{\ar}[1]{\mathrel{\goesto{#1}}}            
\def\squareforqed{\hbox{\rlap{$\sqcap$}$\sqcup$}}
\def\qed{\ifmmode\squareforqed\else{\unskip\nobreak\hfil
\penalty50\hskip1em\null\nobreak\hfil\squareforqed
\parfillskip=0pt\finalhyphendemerits=0\endgraf}\fi}
\newtheorem{theorem}{Theorem}
\newtheorem{definition}{Definition}
\newenvironment{proof}{\noindent{\it Proof.}\ }{\qed}
\newcommand{\plat}[1]{\raisebox{0pt}[0pt][0pt]{#1}}  
\newcommand{\ass}[2]{{\it asgn}_{#1}^{\,#2}}
\newcommand{\noti}[2]{{\it n}_{#1}^{\,#2}}
\newcommand{\rA}{{\it readyA}}
\newcommand{\rB}{{\it readyB}}
\newcommand{\tu}{{\it turn}}
\newcommand{\tr}{{\it true}}
\newcommand{\fa}{{\it false}}
\newcommand{\Tu}[1]{{\it Turn}^{#1}}
\newcommand{\RA}[1]{{\it ReadyA}^{\,#1}}
\newcommand{\RB}[1]{{\it ReadyB}^{\,#1}}
\newcommand{\procA}{{\rm A}\xspace}
\newcommand{\procB}{{\rm B}\xspace}
\newcommand{\procC}{{\rm C}\xspace}
\newcommand{\pilive}{\rho}
\def\titlerunning{Analysing Mutual Exclusion using Process Algebra with Signals}
\title{\mbox{\!\!\!}\titlerunning\!\!\!}
\author{
Victor Dyseryn
  \institute{Ecole Polytechnique, Paris, France}
  \institute{Data61, CSIRO, Sydney, Australia}
  \email{\hspace{-4pt}victor.dyseryn-fostier@polytechnique.edu\hspace{-4pt}}
\and
Rob van Glabbeek \& Peter H\"ofner
  \institute{Data61, CSIRO, Sydney, Australia}
  \institute{School of Computer Science and Engineering\\
  University of New South Wales, Sydney, Australia}%
  \email{\hspace{-4pt}rvg@cs.stanford.edu\qquad Peter.Hoefner@data61.csiro.au\hspace{-4pt}}
}
\begin{document}
\maketitle

\begin{abstract}
In contrast to common belief, the Calculus of Communicating Systems (CCS) and similar process algebras lack
the expressive power to accurately capture mutual exclusion protocols
without enriching the language with fairness assumptions. 
Adding a fairness assumption to implement a mutual exclusion protocol seems counter-intuitive. 
We employ a signalling operator, which can be combined with CCS, or other process calculi, 
and show that this minimal extension is expressive enough to model mutual exclusion: we confirm the correctness of Peterson's mutual exclusion algorithm for two
processes, as well as Lamport's bakery algorithm, under reasonable assumptions on the underlying memory model.
The correctness of Peterson's algorithm for more than two processes requires stronger,
less realistic assumptions on the underlying memory model.
\end{abstract}

\section{Introduction}
In the process algebra community it is common belief that, on some level of abstraction, any distributed system can be modelled in standard process-algebraic specification formalisms like the Calculus of Communicating Systems (CCS)~\cite{Mi89}.

However, this sentiment has been proven incorrect~\cite{GH15}: two of the authors presented a simple fair scheduler---one that in suitable variations occurs in many distributed systems---of which \emph{no} implementation can be expressed in CCS, unless CCS is enriched with a fairness assumption.
Instances of our fair scheduler, that hence cannot be rendered correctly, are the \emph{First in First out}%
\footnote{Also known as First Come First Served (FCFS)},
\emph{Round Robin}, and
\emph{Fair Queueing} 
scheduling algorithms\footnote{\url{http://en.wikipedia.org/wiki/Scheduling_(computing)}}
as used in network routers~\cite{rfc970,Nagle87} and operating systems~\cite{Kleinrock64},
or the \emph{Completely Fair Scheduler}\footnote{\url{http://en.wikipedia.org/wiki/Completely_Fair_Scheduler}},
which is the default scheduler of the Linux kernel since version 2.6.23.
Since fair schedulers can be implemented in terms of mutual exclusion,
this result implies that mutual exclusion protocols,
such as the ones by Dekker~\cite{EWD35,EWD123}, Peterson~\cite{Peterson81} and Lamport \cite{bakery}, cannot be rendered correctly in CCS without imposing a fairness assumption.

Close approximations of Dekker's and Peterson's
protocols rendered in CCS or similar formalisms abound in the literature
\cite{Walker89,Bouali91,Valmari96,EsparzaBruns96,AcetoEtAl07}. 
Unless one makes a fairness assumption these renderings do not
possess the liveness property that when a process leaves its
non-critical section, and thus wants to enter the critical section, it
will eventually succeed in doing so. When assuming fairness, this problem disappears \cite{CorradiniEtAl09}.
However, since mutual exclusion protocols are often employed to ensure that each of several
tasks gets allocated a fair amount of a shared resource, assuming fairness to implement
mutual exclusion appears counter-intuitive.  

Informally speaking, the reason why the CCS rendering of
algorithms such as Peterson's does not work, is that it is possible that a process never gets a
chance to write to a shared variable to indicate interest in entering the critical section.\pagebreak[3]
This is because other processes running in parallel and competing for the critical section are `too
busy' reading the shared variable all the time.

In this paper we extend CCS with
\emph{signals}. This extension is able to express
mutual exclusion protocols \emph{without} the use of fairness assumptions.
To prove correctness, one only needs basic assumptions such as progress
and justness.

We will use this extension to analyse the correctness of some of the most famous protocols for
mutual exclusion, namely Peterson's algorithm, the filter lock algorithm---Peterson's algorithm for more
than two processes---and Lamport's bakery algorithm. With regards to the filter lock algorithm our
analysis reveals some surprising protocol behaviour.

\section{Preliminaries: The Calculus of Communicating Systems}\label{sec:ccs}
The Calculus of Communicating Systems (CCS)~\cite{Mi89} is a process
algebra, which is used to describe concurrent processes.

It is parametrised with sets ${\A}$ and ${\K}$ of \emph{names} and \emph{agent identifiers}.
We define the set of  
\emph{handshake actions} as $\HC:=\A \dcup \bar\A$, where $\bar{\A} := \{ \bar{a} \mid a \in \A \}$ 
is the set of \emph{co-names}.
Complementation is extended to $\HC$ by setting $\bar{\bar{a}} = a$.
Finally, \plat{$Act := \HC\dcup \{\tau\}$} is the set of {\em actions}, where $\tau$ is a special \emph{internal action}.
In this paper $a,b,c,\dots$ range over $\HC$, $\alpha,\beta$ over $Act$, and $A$, $B$ range over $\K\!$.
A \emph{relabelling} is a function $f\!:\HC\mathbin\rightarrow \HC$ satisfying
$f(\bar{a})=\overline{f(a)}$; it extends to $Act$ by $f(\tau):=\tau$.
Each $A\in\K$ comes with a defining equation \plat{$A \stackrel{{\it def}}{=} P$}
with $P$ being a CCS expression as defined below.

The class $\T{CCS}$ of \emph{CCS expressions} is defined as the smallest class that includes

\vspace{-2ex}
\begin{multicols}{3}
    \begin{itemize}
        \item \emph{agent identifiers} $A\in\K$\,;
        \item \emph{prefixes} $\alpha.P$\,;
        \item \emph{(infinite) choices} $\sum_{i\in I} P_i$\,;
        \item \emph{parallel compositions} $P|Q$\,;
        \item \emph{restrictions} $P\backslash \RL$\,;
        \item \emph{relabellings} $P[f]$\,;
        \item[]
    \end{itemize}
 \end{multicols}
 
\vspace{-2ex}
 \noindent where $P,P_{i},Q\in\T{CCS}$ are CCS expressions, $I$ an index set,
$\RL\subseteq \A$ a set of names, and $f$ an arbitrary relabelling function.
 In case $I=\{1,2\}$, we write $P_1+P_2$ for $\sum_{i\in I} P_i$. The \emph{inactive process}
 $\nil$ is defined by $\sum_{i\in \emptyset} P_i$; it is not capable to perform any action.

The semantics of CCS is given by the labelled transition relation
$\mathord\rightarrow \subseteq \T{CCS}\times Act \times\T{CCS}$, where transitions 
\plat{$P\ar{\alpha}Q$}
are derived from the rules of \autoref{tab:CCSsos}.
\begin{table}[t]
\normalsize
\begin{center}
\framebox{$\begin{array}{ccc}
\alpha.P \ar{\alpha} P &
&\displaystyle\frac{P_j\ar{\alpha} P'}{\sum_{i\in I}P_i \ar{\alpha} P'}~~(j\mathbin\in I)\\[3.3ex]
\displaystyle\frac{P\ar{\alpha} P'}{P|Q \ar{\alpha} P'|Q} &
\displaystyle\frac{P\ar{a} P' ,~ Q \ar{\bar{a}} Q'}{P|Q \ar{\tau} P'| Q'} &
\displaystyle\frac{Q \ar{\alpha} Q'}{P|Q \ar{\alpha} P|Q'}\\[3.3ex]
\displaystyle\frac{P \ar{\alpha} P'}{P\backslash \RL \ar{\alpha}P'\backslash \RL}~~
                     (\alpha,\bar{\alpha}\not\in\RL) &
\displaystyle\frac{P \ar{\alpha} P'}{P[f] \ar{f(\alpha)} P'[f]} &
\displaystyle\frac{P \ar{\alpha} P'}{A\ar{\alpha}P'}~~(A \stackrel{{\it def}}{=} P)
\end{array}$}
\end{center}
\vspace{-3ex}
\caption{Structural operational semantics of CCS}
\label{tab:CCSsos}
\end{table}
\noindent 
The process $\alpha.P$ performs the action $\alpha$ first and subsequently acts as $P$.
The choice operator $\sum_{i\in I}P_{i}$ may act as any of the $P_i$, depending on which of the processes is able to act at all.
The parallel composition $P|Q$ executes an action from $P$, an action from $Q$, or in the case where
$P$ and $Q$ can perform complementary actions $a$ and $\bar{a}$, the process can perform a synchronisation, resulting in an internal action $\tau$.
The restriction operator $P \backslash \RL$
inhibits execution of the actions from $\RL$ and their complements. 
The only way for a subprocess of $P \backslash \RL$ to perform an action $a\in \RL$ is through
synchronisation with another subprocess of $P \backslash \RL$, which performs $\bar a$. 
The relabelling $P[f]$ acts like process $P$ with all labels $a$ replaced by $f(a)$.
Last, the rule for agent identifiers says that an agent $A$ has the same transitions as the body $P$ of its defining equation.

As usual, to avoid parentheses, we assume that the operators have decreasing binding strength in the following order:
restriction and relabelling, prefixing, parallel composition, choice.

The pair $\langle \T{CCS}, \rightarrow\rangle$ is called the \emph{labelled transition system} (LTS) of CCS\@.
\vspace{2ex}

\noindent
\hypertarget{Ex1}{\textbf{Example 1\;}}
We describe a simple shared memory system in CCS, using the name
\plat{$\ass{x}{v}$} for the assignment of value $v$ to the variable $x$, and 
$\noti{x}{v}$ for noticing or notifying that the variable $x$ has the value $v$.
The action $\overline{\ass{x}{v}}$ communicates the assignment $x:=v$ to the shared memory,
whereas $\ass{x}{v}$ is the action of the shared memory of accepting this communication.
Likewise, $\overline{\noti{x}{v}}$ is a notification by the shared memory that $x$ equals $v$; it synchronises
with the complementary action $\noti{x}{v}$ of noticing that $x=v$.

We consider the process
$(x^\tr\ParOp R\ParOp W)\backslash\{\ass{x}{\tr},\ass{x}{\fa}, \noti{x}{\tr},\noti{x}{\fa}\},$
where\vspace{-.8ex}
\[\begin{array}{r@{\ }c@{\ }l}
  x^\tr &\stackrel{\it def}{=}& \ass{x}{\tr}\mathbin.x^\tr \ +\ \ass{x}{\fa}\mathbin.x^\fa \ +\ \overline{\noti{x}{\tr}}\mathbin.x^\tr\ ,\\
  x^\fa &\stackrel{\it def}{=}& \ass{x}{\tr}\mathbin.x^\tr \ +\  \ass{x}{\fa}\mathbin.x^\fa \ +\  \overline{\noti{x}{\fa}}\mathbin.x^\fa\ ,\\
  R &\stackrel{\it def}{=}& \noti{x}{\tr}\mathbin.R\qquad\mbox{and}\qquad W \stackrel{\it def}{=} \overline{\ass{x}{\fa}}\mathbin.\textbf{0}\ .
\end{array}\]
\noindent
The processes $x^\tr$ and $x^\fa$ model the two states of a shared
Boolean variable~$x$ (\tr\ and \fa, respectively).
Both accept assignment actions, changing their state
accordingly. They also provide their
respective value to a potential reader.
The process $R$ (\textit{reader}) is an infinite loop which permanently tries to read
\begin{wrapfigure}[4]{r}{0.28\textwidth}
  \vspace{-1.5ex}
   \includegraphics[scale=1.5]{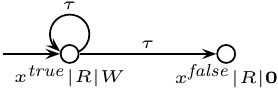}
\end{wrapfigure}
value \tr\ from variable $x$, and the process $W$ (\textit{writer}) tries once to set variable $x$ to false. 
Since the overall process uses the restriction operator, 
its transition system,  depicted on the right, has only two transitions, a \mbox{$\tau$-loop}
of $R$ reading the value, and a transition to $x^\fa|R|\nil$ of $W$ assigning $x$ to false---after that no
further transition is possible.
The justness assumption, to be described in Sects.~\ref{sec:unsatisfactory} and~\ref{sec:justness}, is not sufficient to ensure that the writer eventually performs its transition, and this fact is one of the motivations why we introduce signals in \autoref{sec:signals}.

\section{Peterson's Mutual Exclusion Protocol---Part I}\label{sec:Peterson}
In~\cite{GH15} it is shown that Peterson's mutual exclusion protocol~\cite{Peterson81} cannot be expressed in CCS without assuming fairness.
In this section we briefly recapitulate the protocol itself and
present an optimal rendering in CCS\@. In the next section we
discuss what the problems are with such a rendering.

The `classical' Peterson's mutual exclusion protocol deals with two concurrent processes \procA and \procB 
that want to alternate critical and noncritical
sections.

Each of the processes will stay only a finite amount of time in the critical section,
although it is allowed to stay forever in its noncritical section.
The purpose of the algorithm is to ensure that the processes are never
simultaneously in the critical section, and to guarantee that both
processes keep making progress;
in particular the latter means that if a process 
wants to access the critical section it will eventually do so.
 
A pseudocode rendering of Peterson's protocol is depicted in \autoref{fig:peterson}.
The processes use three shared variables: $\rA$, $\rB$ and $\tu$. 
The Boolean variable $\rA$ can be
written by Process $\procA$ and read by Process $\procB$, whereas $\rB$ can be
written by $\procB$ and read by $\procA$. By setting $\rA$ to $\tr$, Process
$\procA$ signals to Process $\procB$ that it wants to enter the critical
section. The variable $\tu$ can be written and read by both processes. 
Its carefully designed  functionality guarantees  mutual exclusion as well as deadlock-freedom. 
Both $\rA$ and $\rB$ are initialised with\/ $\fa$ and $\tu$ with $\it A$.
\begin{figure}[t]
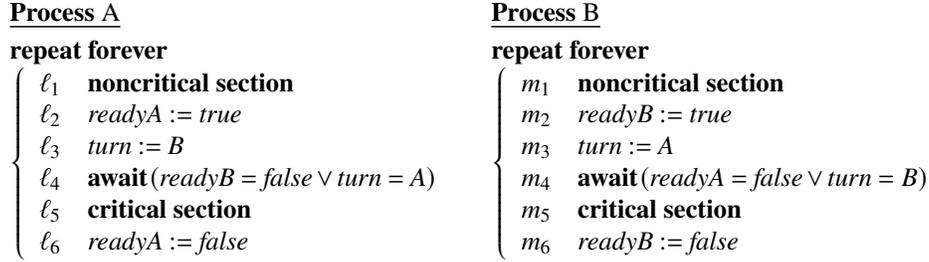

\centering
\small$
\begin{array}{@{}l@{}}
\underline{\bf Process~\procA}\\[.5ex]
{\bf repeat~forever}\\
\left\{\begin{array}{ll}
\ell_1 & {\bf noncritical~section}\\
\ell_2 & \it \rA := \tr	\\
\ell_3 & \it \tu := B \\
\ell_4 & {\bf await}\,(\it \rB = \fa \vee \tu = A) \\
\ell_5 & {\bf critical~section}\\
\ell_6 & \it \rA := \fa\\
\end{array}\right.
\end{array}
~~~~~~
\begin{array}{@{}l@{}}
\underline{\bf Process~\procB}\\[.5ex]
{\bf repeat~forever}\\
\left\{\begin{array}{ll}
m_1 & {\bf noncritical~section}\\
m_2 & \it \rB := \tr	\\
m_3 & \it \tu := A \\
m_4 & {\bf await}\,(\rA = \fa \vee \tu = B) \\
m_5 & {\bf critical~section}\\
m_6 & \it \rB := \fa \\
\end{array}\right.
\end{array}$
\vspace{-2mm}
\caption{Peterson's algorithm (pseudocode)}
\vspace{-2mm}
\label{fig:peterson}
\end{figure}

In order to model this protocol in CCS, we use the names
\ncA,~\crA,~\ncB, and \crB, for Processes \procA and \procB executing their (non)critical section.
The names $\ass{x}{v}$ and $\noti{x}{v}$ for the interactions of \procA and \procB with a shared memory
have been defined in \hyperlink{Ex1}{Ex.~1}.
The Processes \procA and \procB can be modelled~as
\[\begin{array}{@{}l@{~\stackrel{{\it def}}{=}~}l@{}}
\procA  & \ncA\mathbin.\overline{\ass{\rA}{\tr}}\mathbin.\overline{\ass{\tu}{\procB}}\mathbin.
     (\noti{\rB}{\fa}\ +\ \noti{\tu}{\procA})\mathbin.\crA\mathbin.\overline{\ass{\rA}{\fa}}\mathbin.\procA\ ,\\
\procB  & \ncB\mathbin.\overline{\ass{\rB}{\tr}}\mathbin.\overline{\ass{\tu}{\procA}}\mathbin.
     (\noti{\rA}{\fa}\ +\ \noti{\tu}{\procB})\mathbin.\crB\mathbin.\overline{\ass{\rB}{\fa}}\mathbin.\procB\ ,
\end{array}\]

\noindent where $(a+b).P$ is a shorthand for $a.P+b.P$.
This CCS rendering naturally captures the \textbf{await} statement, requiring Process \procA to wait
  at instruction $\ell_4$ until it can read that $\rB = \fa$ or $\tu = A$.
We use two agent identifiers for each Boolean variable, one for each value, similar to
\hyperlink{Ex1}{Ex.~1}. For example, we have
\plat{$\Tu{\procA}~\stackrel{{\it def}}{=}~\ass{\tu}{\procA}\mathbin.\Tu{\procA}\ +$}
\plat{$\ass{\tu}{\procB}\mathbin.\Tu{\procB} \ +\ \overline{\noti{\tu}{\procA}}\mathbin.\Tu{\procA}$.}
Peterson's algorithm is the parallel composition of all these processes,
restricting all the communications\vspace{-.3ex}
\[ (\procA \ParOp \procB \ParOp \RA{\fa} \ParOp \RB{\fa} \ParOp \Tu{A}) \backslash \RL\ ,\]

\vspace{-1mm}
\noindent where $\RL$ is the set of all names except \ncA,~\crA,~\ncB, and \crB.

It is well known that Peterson's protocol satisfies the safety
property that  both processes are never in the critical section at the same time.
In terms of  \autoref{fig:peterson}, there is no reachable state where
\procA and \procB have already executed lines $\ell_4$ and $m_4$ but have not
yet executed $\ell_6$ or $m_6$ \cite{Peterson81,Walker89}.
The validity of the liveness property, that any process
  leaving its noncritical section will eventually
enter the critical section, depends on its precise formalisation, as discussed in
Sects.~\ref{sec:unsatisfactory} and~\ref{sec:justness}.

\section{Why the CCS Rendering of Peterson's Algorithm is Unsatisfactory}\label{sec:unsatisfactory}

Liveness properties generally only hold under some assumptions.
The intended liveness property for Peterson's algorithm may already be violated if
both processes come to a permanent halt for no apparent reason. This behaviour should be considered
unrealistic. To rule it out
one usually makes a \textit{progress} assumption, formalised in
\autoref{sec:justness}, which can be formulated as follows \cite{TR13,GH14}:

\vspace{-0.15ex}
\begin{center}
\textit{%
Any process in a state that admits a non-blocking action will eventually perform an action.
}\end{center}

\vspace{-0.2ex}
Another example is an execution path $\pilive$ in which first Process $\procA$ completes instruction $\ell_1$;
leaving its noncritical section it implicitly wishes to enter the critical section.
Subsequently, Process~\procB cycles through its complete list of instructions in perpetuity
without \procA making
any further progress. This is possible because $\rA$ is never updated and always evaluates to false.
This execution path, if admitted, would be another counterexample to the
intended liveness property. However, progress is not sufficient to rule out such a path; after all
the whole system is making progress. To rule it out as a valid system run, we need the stronger assumption of
\emph{justness} \cite{GH14}, or an even stronger \emph{fairness} assumption \cite{Fr86}.

We formalise justness in the next section.
Here we sketch the general idea, and the difference with fairness, by an example.\pagebreak[3]

Suppose we have a vending machine with a single slot
for inserting coins, and there are two customers: one intends to insert an infinite supply of
quarters, and the other an infinite supply of dimes. No customer intends to ever extract something from
the vending machine. Since a quarter and a dime cannot be entered simultaneously, the two customers
compete for a shared resource. Should it be allowed for one customer to enter an unending sequence of quarters,
while the other does not even get in a single dime? The assumption of (strong or weak) fairness rules
out this realistic behaviour, while weaker assumptions like progress and justness allow this as one of the
valid ways such interactions between the customers and the vending machine may play out.

Alternatively, assume that the same two customers have access to a vending machine each, and that
each of these vending machines serves that customer only. In that case the assumption of progress is
not strong enough to rule out that one customer enters an unending sequence of quarters,
while the other does not even get in a single dime; after all the whole system keeps making progress at all times.
The assumption of justness guarantees that the customer will get a chance to enter his dimes by
applying the idea of progress to isolated components of a system; it
entails that the perpetual insertion of quarters by one customer in one machine in no way prevents
the other customer to insert dimes in the other machine.

In \cite{Walker89} Walker shows that once Process \procA executes instruction $\ell_2$, it will in
fact enter the critical section, i.e.\ execute $\ell_4$. This proof assumes progress, but not
justness, let alone fairness. The only question left is whether we can guarantee that execution of
$\ell_1$ is always followed by $\ell_2$. Thus, when assuming progress, the only possible
counterexample to the intended liveness property of Peterson's algorithm is the execution path
$\pilive$ sketched above, and its symmetric counterpart. This execution represents a battle for the
shared variable $\rA$. Process \procA tries to assign a value to this variable, whereas Process \procB engages
in an unending sequence of read actions of this variable (as part of infinitely many instructions $m_4$).
If we assume that the central memory in which variable $\rA$ is stored implements its own mutual
exclusion protocol, which prevents two processes from reading and writing the same variable at the same
time (but guarantees no liveness property), we may have the situation that Process \procA has to wait before setting $\rA$ to $\tr$ until
Process \procB is done reading this variable. However, Process \procB may be so quick that each time it is
done reading $\rA$ it executes $m_5$--$m_3$ in the blink of an eye and grabs hold of the same
variable for reading it again before Process \procA gets a chance to write to it.
Under this assumption we would conclude that Peterson's algorithm does not have the required
liveness property, since Process \procA may never get a chance to write to $\rA$ because Process \procB is
too busy reading it, and hence never ever enters the critical section.

However, it is reasonable to assume that in the intended setting where Peterson's algorithm would be
employed, the central memory does \emph{not} employ its own mutual exclusion protocol that prevents
one process from writing a variable while another is reading it.\footnote{Without such a protocol, it could
be argued that the reading process may read \emph{anything} when reading overlaps with 
writing the same variable. However, the variable $\rA$ has only two possible
values that can be read, and depending on which of these is returned, the overlapping read action
may just as well be thought to occur before or after the write action.}
With this view of the central memory in mind, instruction $\ell_2$ cannot be blocked by Process \procB,
and hence the assumption of justness is sufficient to ensure that Peterson's protocol \emph{does} have the
required liveness property.

The same conclusion cannot be drawn for the rendering of Peterson's algorithm in CCS\@.
Here the write action \plat{$\ell_2 = \overline{\ass{\rA}{\tr}}$} needs to synchronise with the action\vspace{-2pt}
$\ass{\rA}{\tr}$ of the shared memory storing variable $\rA$.\vspace{-2pt} That process has to make a choice
between executing $\ass{\rA}{\tr}$ and executing $\overline{\noti{\rA}{\fa}}$, the latter in synchronisation
with Process \procB\@. When it chooses $\overline{\noti{\rA}{\fa}}$ it has to wait until this instruction is
terminated before the same choice arises again. Hence the write action can be blocked by the
read action, and justness is not strong enough an assumption to ensure that eventually the
assignment will take place. Assuming fairness is of course enough to achieve this, but risky since it 
has the potential to rule out realistic behaviour (see above).

The above reasoning merely shows that the given implementation of Peterson's mutual exclusion
protocol in CCS requires fairness to be correct. In \cite{GH15} we show that the same holds
for any implementation of any mutual exclusion protocol in CCS, and the same argument applies to a
wider class of process algebras.
Peterson expressed his protocol in pseudocode without resorting to a fairness assumption. 
We understand that he assumes progress and justness
implicitly,  and 
accordingly his protocol and
liveness claim are correct.
It follows that Peterson's pseudocode does not admit an accurate translation into CCS.

\section{Formalising Progress and Justness}\label{sec:justness}

Liveness properties are naturally expressed as properties of
execution paths. A \emph{path} of a process $P$ is an alternating sequence
$P \ar{\alpha_1} P_1 \ar{\alpha_2} ... \ar{\alpha_{n}} P_n \ar{\alpha_{n+1}} ...$
of states and transitions.
A path can be finite or infinite. A possible formulation of the liveness
property  of Peterson's algorithm, applied to paths $\pi$, is that each occurrence of a transition labelled with $\ncA$
in $\pi$ is followed by an occurrence of $\crA$, and similarly for $\procB$.
To express when a liveness property holds for a system $P$, we need the notion of a
\emph{complete} path: one that describes a complete execution of $P$, rather than a partial one.
The property holds for $P$ iff it holds for all its complete paths.
Progress, justness and fairness assumptions rule out certain paths from being
considered complete---those that are in disagreement with the assumption.
The stronger the assumption, the more paths are ruled out, and the more likely
it is that a given liveness property holds.

A complete path ending in a state $P_n$ models a system run in which no further activity takes place 
after $P_n$ has been reached. 
A complete path ending in a transition models a system run where
transitions are considered to have a duration, and the final transition commenced, but never finishes.

One assumption we adopt in this paper is that ``atomic actions always terminate'' \cite{OL82}.
It rules out all paths ending in a transition.
To check whether Peterson's algorithm is compatible with this assumption, we note that
processes are not allowed to stay forever in their critical sections, so the actions $\crA$ and
$\crB$ can be assumed to terminate. Read and write actions of variables terminate as well.
However, a process is allowed to stay forever in its noncritical section, so the actions $\ncA$ and
$\ncB$ need not terminate. To make our formalisation of Peterson's algorithm compatible with the
assumption that actions terminate, we could split the action $\ncA$ into \textbf{start}$(\ncA)$
and \textbf{end}$(\ncA)$. Both these actions terminate, and an execution in which Process \procA stays in
its noncritical section corresponds with a complete path that ends in the state between these transitions.
To save the effort of rewriting the protocol from \autoref{sec:Peterson}, we shall
identify instruction $\ell_6$ with entering the noncritical section, and interpret $\ell_1$ as
leaving the noncritical section. Thus, the processes start out being in their noncritical sections.

To formalise the assumptions of progress and justness, we need the concept of a
\emph{non-blocking} action. A process of the form $\tau.P$ should
surely execute the internal action $\tau$,
and not stay forever in its initial state. However, a process $a.P$ running in the environment 
$(\_\ParOp E)\backslash \{a\}$ may very well stay in its initial state,\linebreak[2] namely when the environment $E$ never
provides a signal $\bar{a}$ that the process can read. With this in mind we assume a classification of the set of
actions into blocking and non-blocking actions \cite{GH14}.
The internal action $\tau$ is always non-blocking, and any action $a$ classified as 
non-blocking shall never be put in the scope of a restriction operator $\backslash \RL$ with $a\in \RL$,
and never be renamed into a blocking action \cite{GH15}.
The transition system of Peterson's algorithm features actions $\crA$, $\crB$, $\ncA$, $\ncB$ and
$\tau$---other names are forbidden by the restriction operator. We classify $\crA$, $\crB$ and $\tau$ 
 as non-blocking, but the actions $\ncA$ and $\ncB$ of leaving the noncritical sections may block.
Our progress assumption rules out as complete any path ending in a state in which a non-blocking
action is enabled, i.e.\ any system state except for the ones where both Processes \procA and \procB are
(back) in their initial state.

The (stronger) \emph{justness assumption} from \cite{GH14} is:
\begin{center}
\parbox{0.87\textwidth}{\textit{%
    If a combination of components in a parallel composition is in a state
    that admits a non-blocking action, then one (or more) of them  will eventually partake in an action.
    }}
\end{center}

\noindent
Its formalisation uses \emph{decomposition}:
a transition \plat{$P|Q \ar{\alpha} R$} derives, through the rules of \autoref{tab:CCSsos}, from
\begin{itemize}
\item a transition \plat{$P \ar{\alpha} P'$} and a state $Q$, where $R=P'|Q$\,,
\item two transitions \plat{$P \ar{a} P'$ and $Q \ar{\bar a} Q'$}, where $R=P'|Q'$ and $\alpha = \tau$\,,
\item or from a state $P$ and a transition \plat{$Q \ar{\alpha} Q'$}, where $R=P|Q'$.
\end{itemize}
This transition/state, transition/transition or state/transition pair is called a \emph{decomposition}
of \plat{$P|Q \ar{\alpha} R$}; it need not be unique.
A \emph{decomposition} of a path $\pi$ of $P|Q$ into paths $\pi_1$ and $\pi_2$ of
$P$ and $Q$, respectively, is obtained by \hypertarget{hr:decomp}{decomposing}\label{pg:decomp} each transition in the path, and
concatenating all left-projections into a path of $P$---the \emph{decomposition of $\pi$ along $P$}---and
all right-projections into a path of $Q$.
It could be that a path $\pi$ is infinite, yet either $\pi_1$ or $\pi_2$ (but not both) are finite.
Decomposition of paths need not be unique.

Similarly, any transition \plat{$P[f] \ar{\alpha} R$} stems from a transition \plat{$P \ar{\beta} P'$},
where $R=P'[f]$ and $\alpha=f(\beta)$.
This transition is called a decomposition of \plat{$P[f] \ar{\alpha} R$}. A \emph{decomposition}
of a path $\pi$ of $P[f]$ is obtained by decomposing each transition in the path, and
concatenating all transitions so obtained into a path of~$P$.
A decomposition of a path of $P\backslash \RL$ is defined likewise.

\begin{definition}\label{df:just path}\rm
The class of \emph{$Y\!$-just} paths, for $Y\mathbin\subseteq\HC$,
 is the largest class of paths in $\T{CCS}$ such that
\begin{itemize}
\item a finite $Y\!$-just path ends in a state that admits actions from $Y$ only;
\item a $Y\!$-just path of a process $P|Q$ can be decomposed into an $X$-just path of $P$ and a $Z$-just
  path of $Q$ such that $Y\mathbin\supseteq X\mathord\cup Z$ and $X\mathord\cap \bar{Z}\mathbin=\emptyset$---here
  $\bar Z\mathbin{:=}\{\bar{c} \mid c\mathbin\in Z\}$;
\item a $Y\!$-just path of
  $P\backslash \RL$ can be decomposed into a $Y\mathord\cup\RL\cup\bar\RL$-just path of $P$;
\item a $Y\!$-just path of $P[f]$ can be decomposed into an
  $f^{-1}(Y)$-just path of $P$; 
\item and each suffix of a $Y\!$-just path is $Y\!$-just.
\end{itemize}
A path $\pi$ is \emph{just} if it is $Y$-just for some set of blocking
actions $Y\subseteq\HC$.
A just path $\pi$ is \emph{$a$-enabled} for an action $a\in\HC$ if $a\in Y$ for all $Y$ such that $\pi$ is $Y\!$-just.
\end{definition}
Intuitively, a $Y\!$-just path models a run in which $Y$ is an upper~bound of the set of labels of
abstract transitions\footnote{The CCS process $a.0 | b.0$ has two transitions labelled $a$, namely
\plat{$a.0 | b.0 \ar{a} 0|b.0$} and \plat{$a.0 | 0 \ar{a} 0|0$}. The only difference between these two transitions
  is that one occurs before the action $b$ is performed by the parallel component and the other
  afterwards. In \cite{GH14} we formalise a notion of an \emph{abstract transition} that
  identifies these two concrete transitions.} that from some point onwards are continuously enabled but never taken.
Here an {abstract transition} with a label from $\HC$ is deemed to be continuously enabled but never
taken iff it is enabled in a parallel component that performs no further actions.
Such a run can occur in the modelled system if the environment from some point
onwards blocks the actions in $Y$.

Now consider the path $\pilive$
violating the intended
liveness property. The decompositions of $\pilive$ along Processes \procA and \procB were mentioned in
\autoref{sec:unsatisfactory}.
These paths are $\{\, \overline{\ass{\rA}{\tr}}\, \}$-just and $\emptyset$-just, respectively.
The decomposition along $\Tu{A}$ is an infinite path
taking action $\ass{\tu}{A}$ only ($\emptyset$-just). The decomposition along $\RB{\fa}$ is an
infinite path
alternatingly taking actions $\ass{\rB}{\tr}$ and $\ass{\rB}{\fa}$ (also $\emptyset$-just) 
and the decomposition along $\RA{\fa}$ is an infinite path
taking action $\overline{\noti{\rA}{\fa}}$ only (again $\emptyset$-just).
It follows that the composition $\pilive$ of these five paths is $\emptyset$-just.
Intuitively this is the case because no communication is permanently enabled and never
taken.\vspace{-4pt} In particular, the communication $\overline{\ass{\rA}{\tr}}$ is
disabled each time the component $\RA{\fa}$ does the action $\overline{\noti{\rA}{\fa}}$ instead.

\section{CCS with Signals}\label{sec:signals}
We would like to prevent such a path to be complete in a CCS model of Peterson's algorithm.
In order to achieve this, we propose to replace an
action such as $\overline{\noti{\rB}{\fa}}$, which makes the variable busy even if it is only read,
by a state predicate providing its value. This
mode of communication is called \emph{signalling}.

\emph{CCS with signals} (\CCSs) is CCS extended with a signalling operator. 
Informally, the signalling operator $P\signals s$ emits the signal $s$ to be read by another process. 
Signal emission cannot block other actions.
Formally, CCS is extended with a set $\Sig$ of \emph{signals},
ranged over by $s,t,\dots$.
In \CCSs the set of actions is defined as $Act := \Sig \dcup \HC\dcup\{\tau\}$.
A relabelling is a function $f:(\Sig\rightarrow\Sig)\cup(\HC\rightarrow\HC)$ satisfying 
$f(\bar a) = \overline{f(a)}$.
As before it extends to $Act$ by $f(\tau) = \tau$.

The class $\T{\CCSs}$ of \emph{\CCSs expressions} is defined as the smallest class that includes

\vspace{-2ex}
\begin{multicols}{3}
    \begin{itemize}
        \item \emph{agent identifiers} $A\in\K$\,;
        \item \emph{prefixes} $\alpha.P$\,;
        \item \emph{(infinite) choices} $\sum_{i\in I} P_i$\,;
        \item \emph{parallel compositions} $P|Q$\,;
        \item \emph{restrictions} $P\backslash \RL$\,; 
        \item \emph{relabellings} $P[f]$\,;
         \item \emph{signallings} $P\signals s$
    \end{itemize}
 \end{multicols}

 \vspace{-2ex}
 \noindent where $P,P_{i},Q\in\T{\CCSs}$ are \CCSs expressions, $I$ an index set,
 $\RL \subseteq \A \cup \Sig$ a set of handshake names and signals, $f$ an arbitrary relabelling function,
 and $s\in \Sig$ a signal. The new operator $\ \signals\ $ binds as strong as relabelling and restriction.

\newcommand{\emptySig}{\ensuremath{\emptyset}}
\newcommand{\sigar}[1]{\ar{#1}\emptySig}
\renewcommand{\sigar}[1]{^{\curvearrowright #1}}
\renewcommand{\P}{\mathcal P}

The semantics of \CCSs is given by the labelled transition relation
$ \mathord\rightarrow \subseteq \T{\CCSs}\times Act \times \T{\CCSs}$
and a predicate $\sigar{}\subseteq \T{\CCSs}\times \Sig$ that
are derived from the rules of CCS (\autoref{tab:CCSsos}, where $\alpha$ can also be a signal)
and the new rules of \autoref{tab:CCSssos}.
\begin{table}[t]
\normalsize
\begin{center}
\framebox{$\begin{array}{cccc}
(P \signals s)\sigar{s}&
\displaystyle\frac{P\ar{\alpha}P'}{P\signals s \ar{\alpha}P'}&
\displaystyle\frac{P\sigar{s}}{(P\signals t)\sigar{s}}&
\displaystyle\qquad\frac{{P_j}\sigar{s}}{(\sum_{i\in I}P_i) \sigar{s}}~~(j\mathbin\in I)\\[4ex]
\displaystyle\frac{P\sigar{s}}{(P|Q) \sigar{s}} &
\displaystyle\frac{P\sigar{s},~ Q \ar{s} Q'}{P|Q \ar{\tau} P| Q'} &
\displaystyle\frac{P\ar{s}P',~ Q \sigar{s}}{P|Q \ar{\tau} P'| Q}&
\displaystyle\frac{Q \sigar{s}}{(P|Q) \sigar{s}}\\[4ex]
\displaystyle\frac{P \sigar{s}}{(P\backslash \RL) \sigar{s}}~~(s\not\in\RL) &
\displaystyle\frac{P \sigar{s}}{P[f] \sigar{f(s)}} &
\displaystyle\frac{P \sigar{s}}{A\sigar{s}}~~(A \stackrel{{\it def}}{=} P)
\end{array}$}
\end{center}
\vspace{-3ex}
\caption{Structural operational semantics for signals of \CCSs}
\label{tab:CCSssos}
\end{table}
The predicate $P\sigar{s}$ indicates that process $P$ emits the signal $s$,
  whereas a transition $P \ar{s} P'$ indicates that
  $P$ reads the signal $s$ and thereby turns into $P'$.
The first rule is the base case showing that
a process $P\signals s$ emits the signal $s$.
The second rule models the fact that signalling cannot prevent a process from making progress. 
After having taken an action, the signalling process loses its ability to emit the signal.
It is essentially this rule which fixes the read/write problem presented in the previous section.
The two rules in the middle of \autoref{tab:CCSssos} state that the
action of reading a signal by one component in
(parallel) composition together with the emission of the same signal by another component,
results
in an internal transition $\tau$; similar to the case of handshake communication.
Note that the component emitting the signal does not change through this interaction.
All the other rules
of \autoref{tab:CCSssos} lift the emission of $s$
by a subprocess $P$ to the 
overall process.
\autoref{tab:CCSssos} can easily be adapted to other process calculi, hence our extension is not limited to CCS.

We give an example similar to the one at the end of \autoref{sec:ccs} to illustrate the use of signals. 
\vspace{2ex}

\noindent
\hypertarget{Ex2}{\textbf{Example 2\; }}
We  describe once again a one-variable shared memory system with an infinite reader $R$ and a single
writer~$W$. But this time communication actions $\noti{x}{v}$ and
$\overline{\noti{x}{v}}$\vspace{2pt} are replaced with signals $\noti{x}{v}$. The variable $x$
now emits a signal notifying its value, so we have: 
\plat{$x^\tr \stackrel{\it def}{=} (\ass{x}{\tr}\mathbin.x^\tr \, +\, \ass{x}{\fa}\mathbin.x^\fa) \signals \noti{x}{\tr}$} and 
\plat{$x^\fa \stackrel{\it def}{=} (\ass{x}{\tr}\mathbin.x^\tr \,+\, \ass{x}{\fa}\mathbin.x^\fa) \signals \noti{x}{\fa}$}; the rest of the example remains unchanged.
The transition system is exactly the same, but now justness guarantees that the variable $x$ will eventually be set to $\fa$. 
This is in contrast to \hyperlink{Ex1}{Ex.~1}, where
  it is not guaranteed that $x$ will eventually be $\fa$, even when assuming justness.
More precisely, if only the reader takes actions, $x^\tr$ is
now not progressing because it is emitting a signal
only, and then, assuming justness, it must eventually enter into communication with the writer.
\vspace{2ex}

As we have extended CCS with a novel operator, we have to make sure that our extension behaves `naturally', in the way one would expect.
\begin{theorem}
Strong bisimilarity \cite{vG00} is a congruence for all operators of \CCSs.
\end{theorem}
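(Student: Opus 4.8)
The plan is to prove that strong bisimilarity on $\T{\CCSs}$ is preserved by each of the seven operator constructions, using the standard technique: for each operator, exhibit a candidate relation built from a given bisimulation $\mathcal R$ and verify it is itself a bisimulation. Since the LTS of \CCSs\ has two kinds of observable behaviour---ordinary transitions $P\ar{\alpha}P'$ (including signal-reading transitions $P\ar{s}P'$) and the signal-emission predicate $P\sigar{s}$---the appropriate notion of strong bisimulation is a symmetric relation $\mathcal R$ such that whenever $P\mathrel{\mathcal R}Q$: (i) every transition $P\ar{\alpha}P'$ is matched by some $Q\ar{\alpha}Q'$ with $P'\mathrel{\mathcal R}Q'$, and (ii) $P\sigar{s}$ iff $Q\sigar{s}$. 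I would first state this definition explicitly (attributing the transition part to \cite{vG00} and noting that the emission-predicate clause is the natural addition), and observe that strong bisimilarity $\sim$, the union of all such relations, is itself an equivalence and a bisimulation.

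Next I would handle the operators one at a time. For prefixing, choice, restriction, relabelling and recursion the transition-clause arguments are exactly as in pure CCS, so the only new work is checking the emission predicate, which is immediate from the corresponding rules in \autoref{tab:CCSssos} (e.g.\ for $\sum_{i\in I}P_i$ the relation $\{(\sum_i P_i,\sum_i Q_i)\mid P_i\mathrel{\mathcal R}Q_i\}\cup\mathcal R$ works, since $(\sum_i P_i)\sigar{s}$ holds iff some $P_j\sigar{s}$ holds; for restriction the side condition $s\notin\RL$ is symmetric in $P$ and $Q$). For parallel composition, take $\mathcal S=\{(P_1|Q_1,P_2|Q_2)\mid P_1\mathrel{\mathcal R}P_2,\ Q_1\mathrel{\mathcal R}Q_2\}$; a transition of $P_1|Q_1$ has one of the five shapes given by the CCS rules plus the two new signal-communication rules, and in each case the matching move of $P_2|Q_2$ is obtained by applying $\mathcal R$ to the component transition(s) and, where a signal-communication rule was used, also to the emission predicate on the non-moving component (which transfers because $\mathcal R$ respects $\sigar{}$); the emission clause for $P_1|Q_1$ follows from the two lift rules. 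The genuinely new operator is signalling: I would use $\mathcal S=\{(P_1\signals s,\ P_2\signals s)\mid P_1\mathrel{\mathcal R}P_2\}\cup\mathcal R$ and check the three rules governing $\ \signals\ $---the base case $(P\signals s)\sigar{s}$ holds unconditionally on both sides; the rule $\frac{P\ar{\alpha}P'}{P\signals s\,\ar{\alpha}P'}$ shows a transition of $P_1\signals s$ goes to some $P_1'$ with $P_1'\mathrel{\mathcal R}P_2'$, and crucially the target is a bare process (the signal is dropped), which is why $\mathcal R$ itself must be included in $\mathcal S$; and the rule $\frac{P\sigar{t}}{(P\signals s)\sigar{t}}$ transfers because $\mathcal R$ respects emission.

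I expect the main subtlety---rather than a real obstacle---to be getting the bookkeeping around the signalling operator exactly right: because after any action $P\signals s$ collapses to an ordinary \CCSs\ expression, the candidate relation for that case cannot be of the uniform ``context applied to $\mathcal R$'' shape but must be $\mathcal R$ together with its image under the context, and one has to be careful that both $\mathcal R$-related \emph{and} $\signals$-wrapped-$\mathcal R$-related pairs satisfy both clauses. A secondary point worth a sentence is that a signal-reading transition $P\ar{s}P'$ is just an ordinary transition with label $s\in\Sig\subseteq Act$, so it is covered by clause (i) and needs no separate treatment; the emission predicate is the only genuinely extra component of the bisimulation game. Assembling these seven cases, and noting each candidate relation was shown to be a bisimulation hence contained in $\sim$, yields that $\sim$ is a congruence.
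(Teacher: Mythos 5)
Your proof is correct, but it takes a genuinely different route from the paper. The paper's own proof is a one-line appeal to meta-theory: it observes that all rules of Tables~\ref{tab:CCSsos} and~\ref{tab:CCSssos} fit the \emph{path} format of Baeten and Verhoef~\cite{BV93}, whence congruence of (the format's induced notion of) strong bisimilarity is automatic. You instead carry out the direct verification, operator by operator, exhibiting a candidate bisimulation for each construction and checking both the transition clause and the emission-predicate clause. The two approaches prove the same statement because the notion of bisimilarity induced by the path format on a transition system with predicates is exactly the one you define explicitly---transitions matched by transitions with related targets, predicates matched by predicates---so your preliminary paragraph spelling this out is not wasted effort; indeed the paper leaves that definition implicit behind the citation of~\cite{vG00}, and making it explicit is arguably a service to the reader. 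What the paper's route buys is brevity and a guarantee that holds for any future rules added in the same format; what your route buys is self-containedness (no familiarity with rule formats required) and genuine insight into \emph{why} the new operator is harmless: your observation that $P\signals s$ collapses to a bare $P'$ after an action, forcing the candidate relation for signalling to be $\mathcal R$ together with its $\signals$-wrapped image rather than a uniform context closure, is precisely the feature a reader of the one-line proof never sees. One small caveat: agent identifiers are constants rather than operators with arguments, so listing ``recursion'' among the seven cases is slightly loose---congruence for them is vacuous, and the stronger statement that bisimilarity is preserved under replacing the \emph{body} of a defining equation is not claimed by the theorem and would need a separate (up-to) argument.
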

\begin{proof}
We get the result directly from the existing theory on structural operational semantics, as a result of carefully designing our language.
All rules of Tables~\ref{tab:CCSsos} and \ref{tab:CCSssos} are 
in the \emph{path} format of Baeten and Verhoef~\cite{BV93}, and hence the theorem holds~\cite{BV93}.
\end{proof}

\begin{theorem}\label{thm:associativity}
The operator $|$ is associative and commutative,
and the operator $\ \signals\ $ is pseudo-commutative, i.e.\ $P\signals s \signals t = P\signals t \signals s$,
all up to bisimilarity.
\end{theorem}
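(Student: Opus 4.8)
The plan is to prove all three claims—associativity of $|$, commutativity of $|$, and pseudo-commutativity of $\signals$—by exhibiting explicit bisimulation relations and checking the transfer conditions against the operational rules of Tables~\ref{tab:CCSsos} and~\ref{tab:CCSssos}. Since \CCSs has both a transition relation $\ar{\alpha}$ and a signalling predicate $\sigar{s}$, the appropriate notion of strong bisimilarity must match both: a symmetric relation $\mathcal{R}$ such that whenever $P\mathrel{\mathcal{R}}Q$, every transition $P\ar{\alpha}P'$ is matched by some $Q\ar{\alpha}Q'$ with $P'\mathrel{\mathcal{R}}Q'$, and moreover $P\sigar{s}$ iff $Q\sigar{s}$. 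I will use the candidate relations $\mathcal{R}_{|} = \{((P|Q)|S,\, P|(Q|S))\}$, $\mathcal{R}_{c} = \{(P|Q,\, Q|P)\}$, and $\mathcal{R}_{\signals} = \{(P\signals s\signals t,\, P\signals t\signals s)\}$, each closed under identity so as to be reflexive on the remaining syntax, and verify the transfer conditions.

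\textbf{Key steps.} First I would handle commutativity of $|$: given a transition of $P|Q$, decompose it (in the sense of \autoref{sec:justness}) into one of the three shapes—a move of $P$ alone, a move of $Q$ alone, a handshake synchronisation, or one of the two signal-synchronisation cases from \autoref{tab:CCSssos}—and in each case reassemble the symmetric transition of $Q|P$; the signalling predicate $P|Q\sigar{s}$ is symmetric by inspection of the two rules $\frac{P\sigar{s}}{(P|Q)\sigar{s}}$ and $\frac{Q\sigar{s}}{(P|Q)\sigar{s}}$. Second, for associativity of $|$, I would do the same case analysis but with more bookkeeping: a transition of $(P|Q)|S$ arises either from a transition of $P|Q$ (itself further decomposable) together with $S$, or from a synchronisation—handshake or signal—between $P|Q$ and $S$, where the synchronising partner inside $P|Q$ is in turn either $P$ or $Q$. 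Each of the resulting subcases is routed to the matching subcase for $P|(Q|S)$; the only mildly delicate points are the signal-synchronisation rules, where one must track that the emitting component is unchanged (so e.g.\ a $\tau$ from $S$ reading a signal emitted by $P$ must be reconstructed as a synchronisation of $S$ with the signal lifted through $Q|\_$), and the signalling predicate, which holds for $(P|Q)|S$ iff one of $P,Q,S$ emits $s$, matching $P|(Q|S)$. Third, for $\signals$, a transition $P\signals s\signals t\ar{\alpha}P'$ must, by the rule $\frac{P\ar\alpha P'}{P\signals s\ar\alpha P'}$ applied twice, come from $P\ar{\alpha}P'$, and the symmetric derivation gives $P\signals t\signals s\ar{\alpha}P'$ with $P'\mathrel{\mathcal{R}_\signals}P'$ trivially; for signals, $P\signals s\signals t$ emits exactly $s$, $t$, and every signal $P$ emits (via the base rule and the lifting rule $\frac{P\sigar{s'}}{(P\signals t)\sigar{s'}}$), and the same set is emitted by $P\signals t\signals s$.

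\textbf{Main obstacle.} None of the individual case checks is hard, but the associativity argument has the largest number of cases because each of the two parallel compositions contributes its own decomposition, and the signal-synchronisation rules add two extra interaction shapes on top of the usual handshake one—so the real work is organising the case analysis cleanly rather than any single calculation. I would present associativity in full and then remark that commutativity and pseudo-commutativity are strictly easier instances of the same technique; alternatively, one could observe that, as with Theorem~1, the rules are in the path format of~\cite{BV93}, and associativity/commutativity of $|$ up to bisimilarity follows from general results on the format together with the evident symmetry of the $|$-rules under swapping the two arguments, but the direct bisimulation is short enough to give explicitly.
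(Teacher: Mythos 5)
Your proof is correct, but it takes a genuinely different route from the paper's. You prove all three claims directly, by exhibiting the candidate bisimulations $\{((P|Q)|S,\,P|(Q|S))\}$, $\{(P|Q,\,Q|P)\}$ and $\{(P\signals s\signals t,\,P\signals t\signals s)\}$ (closed under identity) and discharging the transfer conditions case by case, including the two new signal-synchronisation shapes and the matching of the predicate $P\sigar{s}$. The paper instead avoids almost all of this by an encoding trick: it replaces the predicate $P\sigar{s}$ by a self-loop transition $P\ar{\bar s}P$ (so the first rule of \autoref{tab:CCSssos} becomes $P\signals s\ar{\bar s}P\signals s$), observes that bisimilarity is preserved and reflected by this encoding and that the encoded parallel composition is \emph{literally} that of CCS, and then simply cites associativity and commutativity of CCS parallel composition from \cite{Mi89}; only pseudo-commutativity of $\signals$ is argued directly, by the same same-transitions-and-signals observation you make. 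What your approach buys is self-containment and an explicit audit of exactly which rules create new cases (useful, since the signal-synchronisation rules are where an error would hide); what the paper's approach buys is brevity and reuse of a classical result, at the cost of the unproved but easy claim that the encoding respects bisimilarity --- a claim that is safe here but, as the paper itself stresses afterwards, would fail badly for justness, so the encoding is a tool only for bisimulation-level facts. Your closing remark about deriving associativity from the rule format is not quite what the paper does (the path format of \cite{BV93} is invoked only for the congruence result of Theorem~1, not for associativity), so if you want the short route you should state the encoding argument rather than appeal to the format.
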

\begin{proof}\hspace{-1pt}
Our process algebra with predicates can easily be encoded in a process algebra without, by
writing
\[P \ar{\bar s} P\mbox{ for }P\sigar{s}\ .\] 
On the level of the structural operational semantics,
this amounts to letting $\alpha$ range over $Act \cup \{\bar s \mid s \in \Sig\}$ in the rules of
\autoref{tab:CCSsos}, and changing the first rule of \autoref{tab:CCSssos} into
$P \signals s \ar{\bar s} P \signals s$.
The third rule of \autoref{tab:CCSssos} becomes an instance of the second (with
$\alpha \in Act \cup \{\bar s \mid s \in \Sig\}$), and the remaining rules of 
\autoref{tab:CCSssos} become special cases of the rules of \autoref{tab:CCSsos}.

Clearly, two processes are bisimilar in the original \CCSs iff they are bisimilar in this encoding.
Since the parallel composition of the encoded \CCSs is the same as the one of CCS,
it is known to be associative and commutative up to bisimilarity \cite{Mi89}.

To prove pseudo-commutativity of $\ \signals\ $, we note that $P
\signals s \signals t$ and $P \signals t \signals s$ have exactly the same outgoing transitions and
signals, thereby being trivially equal up to bisimilarity.\vspace{2ex}
\end{proof}

Since we extended CCS, we also have to extend our definition of justness. The decomposition of paths
remains unchanged, except that a 
transition \plat{$P|Q \ar{\tau} R$} can now derive, through the rules of \autoref{tab:CCSssos}, from
signal communication.
In that case we consider the decomposition along the signalling process empty, just as if it was an application of the left- or right-parallel composition rule. Because processes can communicate through signalling, we first introduce the definition of signalling paths. Informally, a path emits signal $s$ if one component in the parallel composition ends in a state where signal~$s$ is activated.
A $Y\!$-signalling path is a path where $Y$ is an upper bound on the signals emitted by the path.

\begin{definition} \rm
The class of \emph{$Y\!$-signalling} paths, for $Y\mathbin\subseteq\Sig$,
 is the largest class of paths in $\T{CCS}$ such that
\begin{itemize}
	\item a finite $Y\!$-signalling path ends in a state that admits signals from $Y$ only;	
	\item a $Y\!$-signalling path of a process $P\ParOp Q$ can be decomposed into an $X$-signalling path 
	of $P$ and a $Z$-signalling path of $Q$ such that $Y\mathbin\supseteq X\mathord\cup Z$;
	\item a $Y\!$-signalling path of $P\backslash \RL$ can be decomposed into a $Y\mathord\cup\RL_\Sig$-signalling path of $P$%
	---here $\RL_\Sig\mathbin{:=}\RL \cap \Sig$ restricts the set $\RL$ to signals;
\item a $Y\!$-signalling path of $P[f]$ can be decomposed into an
  $f^{-1}(Y)$-signalling path of $P$; 
	\item and each suffix of a $Y\!$-signalling path is $Y\!$-signalling.
\end{itemize}
\end{definition}

\noindent Using this definition, we can adapt the definition of justness of \autoref{sec:justness}.\pagebreak[3]

\begin{definition} \rm\label{def:justness}
The class of \emph{$Y\!$-just} paths, for $Y\mathbin\subseteq\HC\cup\Sig\!\!$,
is the largest class of paths in $\T{\CCSs}$ such that
\begin{itemize}
	\item a finite $Y\!$-just path ends in a state that admits actions from $Y$ only;
	\item a $Y\!$-just path of a process $P | Q$ can be decomposed into a path of $P$ that is
          $X$-just and $X'$-signalling, and a path of $Q$ that is $Z$-just and $Z'$-signalling,
  such that $Y\mathbin\supseteq X\mathord\cup Z$, $X \cap \bar Z_\HC = \emptyset$, $X \cap Z' = \emptyset$ and $X' \cap Z = \emptyset$%
  ---here $\bar Z_\HC := \{\bar a \mid a\in Z \cap\HC\}$;
  \item a $Y$-just path of  $P\backslash\RL$ can be decomposed into a $Y \cup \RL \cup \bar L_\HC$-just path of $P$;
  \item a $Y\!$-just path of $P[f]$ can be decomposed into an
  $f^{-1}(Y)$-just path of $P$;
\item and each suffix of a $Y\!$-just path is $Y\!$-just.
\end{itemize}
As before, a path $\pi$ is \emph{just} if it is $Y$-just for some
set of blocking actions and signals $Y\subseteq\HC \cup \Sig\!$.
A just path $\pi$ is \emph{$a$-enabled} for $a\in\HC \cup \Sig$ if $a\in Y$ for all $Y$ such that
$\pi$ is $Y\!$-just.
\end{definition}

\noindent
The condition on signals in the second item guarantees that a process $(\textbf{0}\signals s \ParOp s.\nil) \backslash \{ s \}$ makes progress.

\newcommand{\rholive}{\rho_R}

The encoding in the proof of \autoref{thm:associativity} does not preserve justness.
In \hyperlink{Ex2}{Ex.~2}, for instance, applying the operational
  semantics of Table~\ref{tab:CCSssos},
the path $\rholive$ involving infinitely many read
actions but no write action is not just, because its decomposition along $x^\tr$ is finite and
$\ass{x}{\fa}$-enabled, whereas its decomposition along $W$ is $\overline{\ass{x}{\fa}}$-enabled; so
by the second clause of Def.~\ref{def:justness} $\rholive$ is not just: there are no $Y$, $X$ and $Z$
such that the condition \mbox{$X \cap \bar Z_\HC = \emptyset$} is satisfied.
Yet, after applying the encoding in the proof of \autoref{thm:associativity}, the decomposition
along $x^\tr$ becomes infinite and $\emptyset$-just, and $\rholive$ becomes just. 
This is the main reason we did not present the semantics of \CCSs in this form from the onset.

\section{Peterson's Mutual Exclusion Protocol---Part II\label{sec:peterson2}}
We now present an implementation of Peterson's mutual exclusion algorithm in \CCSs. We
use the same notation as in \autoref{sec:Peterson}, except that
actions $\noti{x}{v}$ and $\overline{\noti{x}{v}}$ are replaced with signals $\noti{x}{v}$, just as in \hyperlink{Ex2}{Ex.~2}.
Only the variable processes change, such as
\plat{${\Tu{\procA}}  \stackrel{\it def}{=} (\ass{\tu}{\procA}\mathbin.\Tu{\procA}\, +\, \ass{\tu}{\procB}\mathbin.\Tu{\procB}) \signals {\noti{\tu}{\procA}}$}; Processes $\procA$ and $\procB$ are unchanged. The protocol rendering is still 
$(\procA \ParOp \procB \ParOp \RA{\fa} \ParOp \RB{\fa} \ParOp \Tu{A}) \backslash \RL$,
where $\RL$ is the set of all names and signals except \ncA,~\crA,~\ncB, and \crB, as before.

In the remainder of this section we prove Peterson's protocol correct, i.e.\ safe and live. 
We include the proof of safety for completeness, but concentrate on liveness.

\begin{theorem}
Peterson's 
protocol is safe.
In terms of  \autoref{fig:peterson}, there is no reachable state where
\procA and \procB have already executed lines $\ell_4$ and $m_4$ but have not
yet executed $\ell_6$ or $m_6$.
\end{theorem}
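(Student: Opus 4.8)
The plan is to establish the safety property by an invariant argument on the reachable states of the \CCSs{} rendering of Peterson's protocol. First I would introduce convenient bookkeeping: for each reachable state of
$(\procA \ParOp \procB \ParOp \RA{\fa} \ParOp \RB{\fa} \ParOp \Tu{A}) \backslash \RL$
I would record the ``program counter'' of Process \procA (which of the prefixes $\ell_1,\dots,\ell_6$ it is about to execute, reading off the syntactic residual of $\procA$), the analogous counter for \procB, and the current values of the three variable processes ($\RA{b}$, $\RB{b}$, $\Tu{p}$). Since all of $\rA,\rB,\tu$ can only change through a synchronisation with the corresponding $\overline{\ass{\cdot}{\cdot}}$ prefix in $\procA$ or $\procB$, and the signalling subterms $\signals\,\noti{\cdot}{\cdot}$ do not alter the stored value (by the second rule of \autoref{tab:CCSssos}), the variable contents evolve in lock-step with the program counters in exactly the way one expects from the pseudocode in \autoref{fig:peterson}.

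The key step is to formulate and prove the standard Peterson invariant: in every reachable state,
\begin{itemize}
\item if \procA's counter is at $\ell_4$ (having passed the \textbf{await}) or at $\ell_5$, then $\rA = \tr$; symmetrically for \procB and $\rB$;
\item if both \procA is at $\ell_5$ and \procB is at $m_5$ (the critical sections), we derive a contradiction.
\end{itemize}
For the contradiction one argues as in \cite{Peterson81,Walker89}: when \procA last executed $\ell_4$ it must have read either $\rB=\fa$ or $\tu=A$; likewise \procB read $\rB... $ — more precisely, \procB at $m_4$ read $\rA=\fa$ or $\tu=B$. But at the moments both are past their awaits we have $\rA=\rB=\tr$, so each must have exited its await via the $\tu$-disjunct, forcing $\tu=A$ for \procA and $\tu=B$ for \procB simultaneously, which is impossible since $\tu$ holds a single value. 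The only subtlety is the ordering argument: the process that set $\tu$ last (say \procB did $\ell_3'$, i.e.\ $\it\tu:=A$ at $m_3$, last) cannot subsequently have observed $\tu=B$ at its await; tracking which write happened last is handled by adding to the invariant a clause constraining $\tu$ in terms of the pair of counters, exactly the shape of invariant Walker uses.

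I would carry this out by induction on the length of the execution path: the initial state $(\procA \ParOp \procB \ParOp \RA{\fa} \ParOp \RB{\fa} \ParOp \Tu{A})\backslash\RL$ satisfies the invariant trivially (both counters at $\ell_1,m_1$, so the conclusions are vacuous), and each of the finitely many transition shapes — a move of \procA alone, a move of \procB alone, a synchronisation of a write-prefix with the matching variable process, or a $\tau$ from signal communication at $m_4$/$\ell_4$ reading $\noti{\cdot}{\cdot}$ — is checked to preserve it. Signal-communication transitions are the new ingredient relative to the CCS analysis, but they only affect the program counters of $\procA$/$\procB$ at the await step and leave the variable processes untouched, so the case analysis is routine. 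The main obstacle, such as it is, is purely organisational: pinning down the correct strengthening of the invariant involving $\tu$ and both counters so that it is genuinely inductive rather than merely true — but this is precisely the classical Peterson invariant, already worked out in the cited literature, and transporting it to the operational semantics of \autoref{tab:CCSsos} and \autoref{tab:CCSssos} presents no real difficulty.
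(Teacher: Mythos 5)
Your proposal is correct in substance but takes a genuinely different route from the paper. The paper simply reproduces Peterson's original argument by contradiction over the execution history: assuming both processes pass their tests at $\ell_4$ and $m_4$, it fixes whichever passed \emph{first} (say \procA), case-splits on whether \procA observed $\rB=\fa$ or $\tu=A$ at that moment, and argues in each case that \procB's test at $m_4$ must subsequently fail; it is a pseudocode-level sketch that never actually engages with the \CCSs terms. You instead propose an inductive invariant over the reachable states, with program counters read off the syntactic residuals and every transition shape of Tables~\ref{tab:CCSsos} and~\ref{tab:CCSssos} (including the new signal-communication $\tau$'s) checked to preserve it. Your approach is more work but is the one that actually verifies the \CCSs rendering rather than the pseudocode, so the difference is a real trade-off between brevity and rigour. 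One caveat: your first-pass contradiction step --- ``at the moment both are past their awaits $\rA=\rB=\tr$, so each must have exited via the $\tu$-disjunct'' --- is not valid as stated, because the disjunct a process used to leave its await is a fact about an earlier state and cannot be inferred from the current variable values (\procA may have passed $\ell_4$ while $\rB$ was still \fa, with \procB executing $m_2$ only afterwards). You do flag this and point to the fix, namely an inductive clause relating $\tu$ to the pair of counters (essentially: if \procA is in its critical section and \procB is at or past $m_4$, then $\tu=A$, and symmetrically), from which the contradiction $\tu=A\wedge\tu=B$ falls out. Be aware, though, that this deferred clause is the entire mathematical content of the safety proof, not an organisational detail; as written, your plan is sound but incomplete until that invariant is stated and checked against each transition.
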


\begin{proof}
We follow the proof by contradiction of Peterson~\cite{Peterson81}.
Suppose both processes succeed the test at $\ell_4$ and $m_4$.
Let $\procA$ be the first to pass this test. At that time
either $\rB$ was
false (meaning that Process~$\procB$ was between $m_6$ and $m_2$)
or $turn$ was set to $A$.
In the first case, $\rA$ will not be set to false before Process $\procA$ leaves the critical section
and $\tu$ is bound to be set to $A$ by Process $\procB$ before $m_4$ is executed.
So the test at $m_4$ will fail. In the
second case, since $\procA$ is about to enter the critical section, $\tu$ cannot be set to $B$ anymore
and $\rA$ is \tr, so again the test $m_4$ will fail for Process $\procB$. 
\vspace{2ex}
\end{proof}

\noindent
Peterson's protocol satisfies also the liveness property. As mentioned before, this result
could not be proven for the formalisation of the protocol in CCS, assuming justness only.
\begin{theorem}
Assuming justness, Peterson's protocol satisfies the liveness property:
on each just path, each occurrence of $\ncA$ is followed by $\crA$ (and similarly for $\procB$).
\end{theorem}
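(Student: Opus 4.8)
The plan is to combine two ingredients: Walker's result (cited in Sect.~4) that once Process $\procA$ has executed $\ell_2$ it is guaranteed to reach $\ell_5$ assuming progress only, and a new argument---using the signalling semantics---that $\ell_1$ is always eventually followed by $\ell_2$ on a just path. By symmetry it suffices to treat Process $\procA$. So suppose $\pi$ is a just path and consider an occurrence of $\ncA$ on $\pi$, i.e.\ Process $\procA$ has just left its noncritical section and sits at $\ell_2 = \overline{\ass{\rA}{\tr}}$. I must show $\crA$ occurs later on $\pi$.

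First I would show that $\ell_2$ is eventually performed. In the \CCSs rendering, the action $\overline{\ass{\rA}{\tr}}$ of Process $\procA$ must synchronise with $\ass{\rA}{\tr}$ of the variable process $\RA{\cdot}$. Crucially, in the new rendering $\RA{\cdot}$ emits $\noti{\rA}{v}$ as a \emph{signal}, so its only genuine transitions are the two assignment reads $\ass{\rA}{\tr}$ and $\ass{\rA}{\fa}$; emitting the signal does not count as progress and in particular does not \emph{block} the assignment, since by the second rule of \autoref{tab:CCSssos} a signalling process retains all its transitions. Hence from the moment $\procA$ is at $\ell_2$, the component consisting of $\procA$ (at $\ell_2$) together with $\RA{\cdot}$ continuously admits the non-blocking $\tau$-transition realising $\overline{\ass{\rA}{\tr}}$. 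If $\pi$ were to avoid this $\tau$ forever, then decomposing $\pi$ along this two-component subsystem would yield a path in which $\procA$ is stuck at $\ell_2$ (so its decomposition is finite and $\{\overline{\ass{\rA}{\tr}}\}$-just) while $\RA{\cdot}$ either is also stuck at a state emitting only the signal (finite, $\{\ass{\rA}{\tr},\ass{\rA}{\fa}\}$-just) or keeps doing $\ass{\rA}{\fa}$/$\ass{\rA}{\tr}$ with some other partner---but there is no other partner for those actions once $\RL$ is applied, so in fact $\RA{\cdot}$ is also stuck. Then Def.~\ref{def:justness}, clause~2, fails: with $X = \{\overline{\ass{\rA}{\tr}}\}$ and $Z \ni \ass{\rA}{\tr}$ we would need $X \cap \bar Z_\HC = \emptyset$, which is violated. (This is exactly the read/write fix: it is the \emph{signalling} of $\noti{\rA}{v}$, rather than a handshake $\overline{\noti{\rA}{\fa}}$ that would block, which makes this argument go through---contrast the failed argument in Sect.~4.) Therefore $\ell_2$ is executed at some later point on $\pi$.

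Next, once $\procA$ has passed $\ell_2$, the subpath of $\pi$ from that point is still just (Def.~\ref{def:justness}, last clause), hence in particular satisfies progress. I then invoke Walker's analysis: along any progressing path, a process sitting past $\ell_2$ eventually clears $\ell_3$ (again an assignment, $\ass{\tu}{\procB}$, which is non-blocking by the same signalling argument applied to $\Tu{\cdot}$) and then passes the test at $\ell_4$. For the $\ell_4$ step I would spell out why the guarded choice $(\noti{\rB}{\fa} + \noti{\tu}{\procA})$ is eventually resolvable: by the safety reasoning and a case analysis on $\procB$'s position, at least one of the signals $\noti{\rB}{\fa}$ or $\noti{\tu}{\procA}$ is emitted by the corresponding variable process and remains emitted until $\procA$ reads it---here I again use that signals are emitted by a \emph{state predicate}, so a variable holding value $v$ continuously offers $\noti{\cdot}{v}$ as long as nothing reassigns it; and the reassignments that could remove the signal are precisely the ones ruled out by the safety theorem while both processes are in the relevant region. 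So $\procA$ reads a signal, reaches $\ell_5 = \crA$, and the occurrence of $\ncA$ is indeed followed by $\crA$.

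The main obstacle is the $\ell_4$ step---the liveness of the guarded-await. The two earlier steps ($\ell_1\!\to\!\ell_2$ and the assignments $\ell_3$) are each a single non-blocking $\tau$ that the justness clause forces; but $\ell_4$ requires arguing that the disjunctive guard stays satisfiable, which intertwines with $\procB$'s behaviour and with what $\procB$ may write to $\rB$ and $\tu$. Concretely, the delicate case is when $\procB$ repeatedly cycles; I must show that $\procB$ cannot both keep $\rB = \tr$ and keep $\tu = \procB$ (it has to set $\tu := A$ at $m_3$ every cycle), so that on $\pi$ one of the two signals $\procA$ is waiting for is available infinitely often and---because reading a signal is non-blocking and does not consume the emitter---justness forces $\procA$ to take it. Formalising ``available and stays available until read'' via the $Y$-signalling decomposition of Def.~5, and checking the $X\cap Z' = \emptyset$ / $X'\cap Z = \emptyset$ side-conditions for the relevant splitting, is where the real work lies; the rest is bookkeeping with path decomposition.
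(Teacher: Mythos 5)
Your first half---forcing the step from $\ell_1$ to $\ell_2$ by decomposing along $\procA$ and $\RA{\fa}$, observing that in \CCSs only $\procA$ handshakes with $\RA{\fa}$ (reads by $\procB$ are now signals and contribute no transitions to the variable's decomposition, so that decomposition is finite), and deriving a violation of the side condition $X \cap \bar Z_\HC = \emptyset$---is exactly the paper's argument. The genuine gap is in the second half, which you explicitly leave open (``is where the real work lies''). Two concrete problems with the route you sketch for closing it. First, the mechanism ``one of the two signals is available infinitely often, and justness forces $\procA$ to take it'' is not sound: justness, including the signalling clauses of Def.~\ref{def:justness}, only constrains actions and signals that are enabled in a component that performs \emph{no further} transitions, i.e.\ enabled continuously from some point on; intermittent availability forces nothing. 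Second, ``the same signalling argument applied to $\Tu{}$'' for $\ell_3$ does not transfer as stated: unlike $\RA{\fa}$, the variable $\tu$ is written via handshakes by \emph{both} processes, so its decomposition is not automatically finite and one must first bound $\procB$'s writes to it. (Relatedly, Walker's analysis concerns the CCS rendering; the paper does not import it but re-derives the $\ell_2$-to-critical-section step for the \CCSs rendering.)

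The paper closes both remaining cases ($\procA$ stuck before $\ell_3$ or before $\ell_4$) with a single, more elementary observation that you are one step away from: since $\rA$ is already $\tr$, Process $\procB$ can pass its test at $m_4$ at most once more---to re-enter it must execute $m_3$, setting $\tu:=\procA$, after which neither disjunct of its guard can ever become true because $\procA$ never reaches $\ell_3$ again. Hence $\procB$ also gets stuck, the whole path $\pi$ is \emph{finite}, and its final state enables a non-blocking $\tau$: the write at $\ell_3$ (always accepted by the variable), or at $\ell_4$ a read of $\noti{\rB}{\fa}$ or $\noti{\tu}{\procA}$, one of which must be emitted given that $\procB$ halted either in its noncritical section or at $m_4$ after $m_3$. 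This contradicts the \emph{first} clause of Def.~\ref{def:justness}---mere progress---with no need to reason about persistence of signals along an infinite run or to check the $X\cap Z'$ and $X'\cap Z$ side conditions. As it stands, your proposal does not establish the await step; replacing your $\ell_4$ sketch with this finiteness argument repairs it.
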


\begin{proof}
Let $\pi$ be a just path of the protocol. Since $\ncA$ and $\ncB$ are the only
possible blocking actions, $\pi$ must be $\{\ncA,\ncB\}$-just.
If we get rid of all the restrictions we
obtain a $Y$-just path of $(\procA \mid \procB \mid \RA{\fa} \mid \RB{\fa} \mid \Tu{\procA})$ where
$Y = \{\ncA,\,\ncB\} \cup \RL \cup \bar\RL_\HC$. Suppose its decomposition $\pi_A$ along
Process $\procA$ ends somewhere between
instructions $\ell_1$ and $\ell_4$. Then the decomposition $\pi_{\RA{}}$ along $\RA{\fa}$ is
also finite since only $\procA$ can communicate with this process. 
Using the CCS rendering from \autoref{sec:Peterson} this statement would be incorrect, since
there Process $\procB$ can  constantly interact with $\RA{\fa}$,
by reading its value; resulting in an infinite path \expandafter\ifx\csname graph\endcsname\relax
   \csname newbox\expandafter\endcsname\csname graph\endcsname
\fi
\ifx\graphtemp\undefined
  \csname newdimen\endcsname\graphtemp
\fi
\expandafter\setbox\csname graph\endcsname
 =\vtop{\vskip 0pt\hbox{%
\pdfliteral{
q [] 0 d 1 J 1 j
0.576 w
0.072 w
q 0 g
11.16 -15.048 m
5.256 -10.512 l
12.6 -11.736 l
11.16 -15.048 l
B Q
0.576 w
5.256 -6.192 m
13.392 -2.7 l
18.92448 -0.32544 24.48864 0.792 30.78 0.792 c
37.07136 0.792 40.032 -2.1456 40.032 -8.388 c
40.032 -14.6304 37.07136 -17.568 30.78 -17.568 c
24.48864 -17.568 19.02816 -16.48512 13.716 -14.184 c
5.904 -10.8 l
S
Q
}%
    \graphtemp=.5ex
    \advance\graphtemp by 0.116in
    \rlap{\kern 0.812in\lower\graphtemp\hbox to 0pt{\hss $\noti{\rA}{\fa}$\hss}}%
    \hbox{\vrule depth0.232in width0pt height 0pt}%
    \kern 0.812in
  }%
}%

$\RA{\fa}\!$\mbox{\raisebox{2.4ex}{\box\graph}}\hfill.

By Def.~\ref{def:justness}, the path $\pi_A$ must be $X$-just, and the path $\pi_{\RA{}}$ $Z$-just,
  for sets $X,\,Z \subseteq Y$ with $X \cap \bar Z_\HC = \emptyset$.
  Furthermore, $\ass{\rA}{\tr} \in Z$, since this action is enabled in the
  last state of $\pi_{\RA{}}$. Hence $\overline{\ass{\rA}{\tr}} \notin X$.
  Therefore $\pi_A$ cannot end right before instruction $\ell_2$.
As a result,
Process $\procA$ is stuck either right before $\ell_3$, or right before $\ell_4$.
In both cases Process $\procB$ would not be able to pass the test before the critical
section more than once. Indeed, in either case $\rA$ is
already set to \tr, thus Process $\procB$ must use $\tu=B$ to enter its
critical section. But, if trying to enter a second time, it would be
forced to set $\tu$ to $\procA$ and will be stuck.
When Processes $\procA$ and $\procB$ are both stuck, the path $\pi$
is finite and an action $\tau$ or $\ncA$ stemming from instruction $\ell_3$ or $\ell_4$ is enabled
at the end, contradicting, through the first clause of Def.~\ref{def:justness}, the
$\{\ncA,\ncB\}$-justness of $\pi$.
\end{proof}

\section[Peterson's Algorithm for N Processes]{Peterson's Algorithm for $N$ Processes}\label{sec:petersonn}

In the previous section we presented an implementation in \CCSs of Peterson's algorithm of mutual
exclusion for two processes. In \cite{Peterson81}, Peterson also presents a generalisation of his mutual
exclusion protocol to $N$ processes. In this section we describe the
algorithm and explain which assumptions should be made on the memory model in order for this protocol to
be correct, for $N {>}2$.
We claim that these assumptions are somewhat unrealistic.
\begin{figure}[b]
\vspace{-6pt}
\small
\[
\begin{array}{@{}l@{}}
\underline{\bf Process~i}~~(i\in\{1,\dots,N\})\\[1ex]
{\bf repeat~forever}\\
\left\{\begin{array}{ll}
\ell_1 & {\bf noncritical~section}\\
\ell_2 & {\bf for\ } j {\bf\ in\ } 1\dots N-1\\
\ell_3 & \quad \it room[i] := j\\
\ell_4 & \quad \it last[j] := i\\
\ell_5 & \quad {\bf await}\,(\it last[j] \neq i \vee (\forall k \neq i,~ room[k] < j))\\
\ell_6 & {\bf critical~section}\\
\ell_7 & \it room[i] := 0\vspace{-2pt}
\end{array}\right.
\end{array}
\]
\caption{Peterson's algorithm for $N$ processes (pseudocode)}\vspace{-2pt}
\label{fig:petersonN}
\end{figure}

\newcommand{\room}{Room}%
A pseudocode rendering of Peterson's protocol is depicted in \autoref{fig:petersonN}.
In order to proceed to the critical section, each process must go through $N{-}1$ locks
(\textit{rooms}). The shared variable $\it room[i] = j$ indicates that process number $i$ is currently
in \room~$j$. The shared variable $\it last[j] = i$ indicates that the last process to
`\textit{enter}' \room~$j$ is Process $i$. A process can go to the next room if and only if it
is not the last one to have entered the room, or if all other processes are strictly behind it. This
algorithm is also called the \emph{filter lock} because it ensures that for all $j$, no more than
$N{+}1{-}j$ processes are in rooms greater or equal than~$j$.
The critical section can be thought of as \room~$N$.

A natural memory model, used in \cite{bakery}, stipulates that memory accesses from different
  components can overlap in time, and that a 
  read action that overlaps with a write action of the same variable
  may yield \emph{any} value. Extending this idea, 
  we assume that when two concurrent write actions overlap, \emph{any} possible value
  could end up in the memory. We argue that the algorithm fails to satisfy mutual
  exclusion when assuming such a model.

  Suppose there are three processes, $\procA$, $\procB$ and $\procC$, and Processes $\procA$ and $\procB$ execute
  $\ell_1$--$\ell_4$ more or less simultaneously. When their instructions $\ell_4$ overlap, the value
  $\procC$ ends up in the variable $\it last[1]$---or any other value different from $\procA$ and $\procB$. Hence they both perform $\ell_5$, as well as
  $\ell_3$--$\ell_4$ for $j{=}2$. Again, the value $\procC$ ends up in $\it last[2]$.
  Subsequently, they both enter their critical section, and disaster strikes.

It follows that Peterson's algorithm for $N{>}2$ only works when running on a memory where write
  actions cannot overlap in time, or---if they do---their effect is the same as when one occurred
  before the other. Such a memory can be implemented by having a small hardware lock around a write
  action to the same variable. This entails that one write action would have to wait until the other one is completed.
  A memory model of this kind is implicitly assumed in process algebras like CCS(S).

  We show that, under such a memory model, Peterson's algorithm for $N{>}2$ does not satisfy
  liveness, unless we enrich it with an additional fairness assumption.

To prove this statement, let $N = 3$ and call the processes $\procA$, $\procB$ and $\procC$.
We show that (without the additional assumption) 
Process $\procA$ can be stuck at $\ell_4$ for $j {=} 1$.
Suppose Process $\procA$ is at this line. Then $\procA$ is about to set $\it last[1]$ to $A$, but has not written yet.
We can imagine the following scenario: Process $\procB$ enters \room~1, and sets $\it last[1]$ to $B$;
then Process $\procC$ enters \room~1, and sets $\it last[1]$ to $C$. This allows $\procB$ to
proceed to \room~2, then to go in the critical section (because all other processes are
still in room~1), and  then to  go back to \room~1, setting $\it last[1]$ to $B$. This allows
$\procC$ to go to \room~2, to the critical section, and back to room 1, setting $\it last[1]$ to $C$. Next $\procB$ 
can enter the critical section again, etc. 
Hence Processes $\procB$ and $\procC$ can go alternately in the critical
section without giving $\procA$ a chance to set variable $\it last[1]$. (The  variable
is too busy being written by $B$ and $C$.) This scenario cannot happen for $N = 2$ because
after $\procB$ sets $\it last[1]$ to $B$, $\procB$ is blocked until $\procA$ sets it to $A$; so
$\ell_4$ will eventually happen (with progress as a basic assumption).

As a consequence, in order for Peterson's algorithm to be live for more than two processes, we must adopt 
the additional fairness assumption that \emph{if a process permanently tries to write to a variable, it will eventually do so}, even if other processes are competing for writing to the same variable.
This property appears to be at odds with having a hardware lock around the shared variable.
Moreover, it cannot be implemented in \CCSs assuming only justness: when two
competitive processes try to write the same variable, nothing guarantees that both will eventually
succeed.\footnote{Let us consider a \CCSs process $(x^\tr | W_1 | W_2)\backslash \RL$ where processes
$W_1$ and $W_2$ are infinite writers ($W_i \stackrel{\it def}{=} \overline{\ass{x}{\fa}}.W_i$) and $L$ is the set of communication names. A path where $W_1$ always succeeds,\vspace{-2pt}
meaning that the decomposition along $W_2$ is empty, is just because the latter decomposition is 
$\{\,\overline{\ass{x}{\fa}}\,\}$-just and all the other decompositions $\emptyset$-just.} As a result
any \CCSs-rendering of Peterson's algorithm for $N$ processes does not possess the liveness property,
unless one makes a fairness assumption. The problem comes from the fact that
the variables $\it last[\cdot]$ are written by several parallel processes.
Signals only allow a writer to set a variable while it is being read but do not allow multiple writers at the same time.

We believe that the problem does not come from a lack of expressiveness of \CCSs but from
the protocol, which, while not being incorrect in itself, requires a memory model that 
assumes write actions to happen eventually, even though simultaneous
    interfering write actions are excluded; whether this is a realistic assumption on modern
    hardware requires further investigation.
\vspace{-0.09pt}

\section{Lamport's Bakery Algorithm}

In this section we analyse Lamport's bakery algorithm \cite{bakery},  another
mutual exclusion protocol for $N$ processes. It has the property that
processes write to separate variables; only the read actions are shared. We give a model for this
algorithm in \CCSs and prove its liveness property, assuming justness only.
\begin{figure}
\small
\[
\begin{array}{@{}l@{}}
\underline{\bf Process~i}~~(i\in\{1,\dots,N\})\\[1ex]
{\bf repeat~forever}\\
\left\{\begin{array}{ll}
\ell_1 & {\bf noncritical~section}\\
\ell_2 & \it choosing[i] := \tr\\
\ell_3 & \it number[i] := 1 + \max(number[1],\dots,number[N]);\\
\ell_{4} & \it choosing[i] := \fa\\
\ell_{5} & {\bf for\ }j {\bf\ in\ } 1\dots N\\
\ell_{6} & \quad {\bf await}\,(\it choosing[j] = \fa)\\
\ell_{7} & \quad {\bf await}\,(\it number[j] = 0 \vee (number[i],i) \leq (number[j],j))\\
\ell_{8} & {\bf critical~section}\\
\ell_{9} & \it number[i] := 0
\end{array}\right.
\end{array}
\]
\caption{Lamport's bakery algorithm for $N$ processes (pseudocode)}
\label{fig:bakery}
\vspace{-1ex}
\end{figure}

A pseudocode rendering of Lamport's bakery algorithm is depicted in \autoref{fig:bakery}.
Lines $2$--$4$ are called the \textit{doorway} and lines $5$--$7$ are called the \textit{bakery}.
In the doorway each process `takes a ticket' that has a number
strictly greater than all the numbers from the
other processes (at the time the process reads them). 
The variable $\it choosing[i]$ is a lock that makes line $3$, which is usually
  implemented by a simple loop, more or less `atomic'.
To ensure that the holder of the lowest number is next in the critical section, 
each process goes through a number of
locks in the bakery (Lines $5$--$7$). 
When  process $i$ enters the critical section, the value it has read for
$\it number[j]$, if not $0$, is greater or equal than its own $\it number[i]$,  for all $j$.

We now model this algorithm in CCSS\@. As usual, we define one agent for every pair
$(\mbox{variable}, \mbox{value})$. The variables $\it choosing$ can take values \tr\ or \fa, and $number$ any
non-negative value. The modelling of a Boolean variable is addressed in \hyperlink{Ex1}{Ex.~1},
and for the integer variables we define:

\[ \it number[i]^k \stackrel{def}{=} \Big(
\sum_{l \in \mathbb N} \ass{\it number[i]}{l} \mathbin. \it
number[i]^l \Big)
\signals \noti{\it number[i]}{k} \;.\]
\noindent
Each process
 $i$ begins with a non-critical section before entering the doorway.
\[ P_i \stackrel{\it def}{=} \textbf{noncrit[$i$]}\mathbin.\overline{\ass{\it choosing[i]}{\tr}} \mathbin. \it doorway[i]_0^1 \]

 Line $3$ encodes several read actions, an arithmetic operation, and an assignment in a single step. 
 In CCS(S) (and most programming languages) this command is modelled by several atomic steps, e.g.\ 
 by the simple loop $m := 0;\; {\bf for\ } j {\bf\ in\ } 1\dots N\{ m := \texttt{max(} m, number[j] \texttt{)}\};\;
{\it number[i]} := 1+m$.
We define processes
$\it doorway[i]_m^j$ that represent the state of being in the doorway \textbf{for}-loop for a process $i$ with loop index $j$ and local variable $m$ 
storing the current maximum.
\[ \it doorway[i]_m^j \stackrel{def}{=} 
\Big(\sum_{k > m} \noti{\it number[j]}{k} \mathbin.\it
doorway[i]_k^{j+1}\Big)
+ \Big(\sum_{k \leq m} \noti{\it number[j]}{k} \mathbin.\it doorway[i]_m^{j+1}\Big) \ ,\ j \in \{1,\dots,N\}\]
We then define $\it doorway[i]_m^{N+1}$, which represents
the termination of the \textbf{for}-loop by
\[ \it doorway[i]_m^{N+1} \stackrel{def}{=} \overline{\ass{number[i]}{m+1}}\mathbin. \overline{\ass{choosing[i]}{\fa}} \mathbin. bakery[i]_{m+1}^1 \;.\]
The process $\it bakery[i]_m^j$ represents the state of being in the bakery \textbf{for}-loop for process $i$
with loop index $j$ and $\it number[i] = m$. For $j \in \{1,\dots,N\}$:
\[ \it bakery[i]_m^j  \stackrel{def}{=} \noti{\it choosing[j]}{\fa} \mathbin. 
\Big(\noti{\it number[j]}{0} \ + \sum_{k > m \vee (k = m \wedge j \geq i)} \noti{number[j]}{k}
\Big)
\mathbin.bakery[i]_m^{j+1} \;.\]
Finally, $\it bakery[i]_m^{N+1}$ is the exit of the bakery \textbf{for}-loop, granting access to the critical section:\vspace{-2pt}
\[\it bakery[i]_m^{N+1} = \textbf{crit[$i$]} \mathbin.\overline{\ass{\it number[i]}{0}} \mathbin.P_i \]

\noindent
Our bakery algorithm is the parallel composition of all 
processes $P_i$, in combination with the shared variables $\it choosing[i]$ and $\it number[i]$,
restricting the communication actions:
\[ 
\Big(\BigParOp_{i \in \{1,\dots,N\}}
(P_i \mid \it choosing[i]^{\fa} \mid number[i]^0) 
\Big)\backslash \RL \ , \]
where $\RL$ is the set of all names and signals except \textbf{noncrit[$i$]} and \textbf{crit[$i$]}.

We now prove the liveness of (our rendering of) the algorithm, given that it is straightforward to adapt Lamport's proof of safety of the pseudocode \cite{bakery} to \CCSs. Since every process writes in its own variables, no process can be stuck because of concurrent writing.
Therefore, the only possibility for a process (call it \procA) to be stuck is
when trying to read a variable, so at $\ell_{3}$, $\ell_{6}$ or $\ell_{7}$.

If Process \procA is stuck at $\ell_{3}$, trying to read $\it number[\procB]$ for some process $\procB$,
  $\procB$ will get stuck at $\ell_{6}$ for $j{=}\procA$, because $\it choosing[\procA]$ remains \fa.
  So, Process $\procB$ cannot be perpetually busy writing $\it number[\procB]$, and $\procA$ cannot be
  stuck at $\ell_{3}$.

If $\procA$ is stuck at $\ell_{6}$, then  from the point of
view of \procA, some process \procB is all the time in  the doorway.
It follows from the argument above that $\procB$ cannot be stuck in one visit to its doorway,
  so it must be a repeating series of visits. This is impossible because when \procA tries to
read $\it choosing[\procB]$ for the first time, the value of $\it number[\procA]$ is set
and will not change
anymore, so if \procB goes back to the doorway, it is bound
to set $\it number[\procB] > number[\procA]$ and will not be able to
enter the critical section anymore.

Suppose that Process \procA is stuck at $\ell_{7}$. Any process \procB that enters the doorway will
receive a $\it number[\procB]$ strictly larger than
$number[\procA]$ and be stuck in the bakery. So if
\procA is stuck, eventually all processes are stuck at $\ell_{7}$, which is impossible since
every finite lexicographically ordered set has a minimal element.

\section{Conclusion, Related Work and Outlook}

This paper presents a minimal extension
of CCS in which Peterson's mutual exclusion protocol
can be modelled correctly, using a justness assumption only.
The signalling operator allows processes to emit signals that can be received by other processes.
The signalling process is not blocked by the emission of the signal, which means that
its actions are in no way postponed or affected by other processes reading the signal.
This property
is crucial to correctly model mutual exclusion.

Our process algebra, \emph{CCS with signals}, is strongly inspired by, and can be regarded as a
simplification of, Bergstra's \emph{ACP with signals} \cite{Bergstra88}. The idea of a signal as a
predicate on states, rather than a transition between states, stems from that paper.
However, the non-blocking nature of signals was not explored by Bergstra,  who writes
``The relevance of signals is not so much that process algebra without signals lacks expressive power''.
This point is disputed in the current paper.

CCS with signals is not the first process algebra with explicitly non-blocking read actions.
In \cite{CDV09} Corradini, Di Berardini \& Vogler add a similar operator to PAFAS~\cite{CVJ02}, a process algebra for modelling timed concurrent systems.
The semantics of this extension is justified in \cite{CDV11}. 
They show~\cite{CDV09} that this enables the
liveness property of Dekker's mutual exclusion algorithm~\cite{EWD35,EWD123}, modelled in PAFAS, when assuming
\emph{fairness of actions}, and in \cite{EPTCS54.4} they establish the same for Peterson's
  algorithm, while showing that earlier mutual exclusion algorithms by
  Dijkstra~\cite{Dijkstra65} and Knuth~\cite{Knuth66} lack the liveness property under fairness of actions.
Fairness of actions is
similar to our notion of justness---although formalised in a quite different way---except that all
actions are treated as being non-blocking. The notion of time plays an important role in the
formalisation of the results in \cite{CDV09,EPTCS54.4}, even if it is not used quantitatively. Our process
algebra can be regarded as a conceptual simplification of this approach, as it completely abstracts
from the concept of time, and hence is closer to traditional process algebras like CCS and CSP\@.

The accuracy of our extension depends highly on which memory model is considered as realistic.
It is well known that in weak or relaxed memory models, mutual exclusion protocols like Peterson's or
the bakery algorithm do not behave correctly; when employing a weak memory model, mutual exclusion
is handled on the hardware layer only---this is not covered here. An extremely plausible memory
model allows parallel non-blocking writing, but admits any value being
written when two parallel write actions overlap. This memory model is compatible with the bakery
algorithm, and with Peterson's algorithm for two processes, but---as we show---not for Peterson's
algorithm with $N{\geq}3$ processes. Instead one needs a form of sequential consistency, assuming
that parallel write actions, or a parallel read/write, behave as if they are executed in either order.

When postulating sequential consistency, it is plausible to assume some kind of mutual exclusion
between write actions being implemented in hardware. This in turn allows the possibility of a write
action being delayed in perpetuity because other processes are writing to the same variable.
Similarly, read actions could be blocked by a consistent flow of write actions.
A third type of blocking is that write actions can be obstructed by read actions. However, this
kind of blocking is questionable;
it could be that during a parallel read/write the write action wins, and only the read action gets postponed.

When assuming all three kinds of blocking, the CCS rendering of mutual exclusion
protocols---illustrated in \autoref{sec:Peterson}---is fully accurate, and by \cite{GH15} we
conclude that no such protocol can have the intended liveness property. When disallowing write
actions being blocked by read actions, but allowing write/write blocking, we get the modelling in
\CCSs. Using \CCSs, we verified the correctness of the bakery algorithm, and Peterson's
algorithm for two processes, whereas Peterson's for $N>2$ fails liveness.  The
latter protocol becomes correct if we assume sequential consistency without any kind of blocking.
Whether this is a realistic memory model on modern hardware needs further investigation.
Regardless, we conjecture that such a memory can be modelled in an extension of \CCSs with broadcast
communication, i.e.\ the combination of the process algebras presented here and in \cite{GH14}.

The liveness property of Dekker's algorithm, when assuming merely justness, or fairness of
actions, requires not only non-blocking reading, but also that repeated assignments to a variable $x$ of
the same~value cannot block the reading of $x$ \cite{CDV09}.
This assumption can be modelled in \CCSs,\vspace{-2pt} by defining $\rA$ of
\hyperlink{Ex1}{Ex.~1} by
$x^\tr \stackrel{\it def}{=} (\ass{x}{\fa}\mathbin.x^\fa)\signals\noti{x}{\tr}$ and
$x^\fa \stackrel{\it def}{=} (\ass{x}{\tr}\mathbin.x^\tr)\signals\noti{x}{\fa}$,
and replacing write actions $\overline{\ass{x}{v}}$ by $(\overline{\ass{x}{v}} + \noti{x}{v})$.
Alternatively, a pseudocode assignment $x:=v$ could be interpreted as ~$\textbf{if}~ x\neq v ~\textbf{then}~ x:=v ~\textbf{fi}$.

Although mutual exclusion protocols cannot be modelled in standard Petri nets---when not
assuming fairness---\cite{KW97,Vogler02,GH15}, it is possible in nets extended with read arcs
\cite{Vogler02}. This opens the possibility of interpreting \CCSs in terms of nets with read arcs,
whereas an accurate semantics of \CCSs in terms of standard nets is impossible.
A read arc from a place to a transition requires the place to be marked for the
transition be enabled, but the token is not consumed when the transition is fired.
This behaviour really looks like signalling, so
we conjecture that a read-arc net semantics of \CCSs is fairly straightforward.

Finally, the definition of justness appears
complicated because it includes the decomposition of paths. In order to compute if a path (an object
from the semantics) is just or not just, we
investigate the syntactic shape of the states on that path.
It could be that the semantic object---the labelled transition system%
---is not well adapted to the problem of justness. Giving a semantics to \CCSs
that inherently includes the decomposition of paths---inspired by \cite{BCHK93,BCHK94,Aceto94,MN92}---could be an interesting idea for future research.

\newpage
\bibliographystyle{eptcs}
\bibliography{aodv}

\def\SSort#1{}\def\NSort#1{}
\begin{thebibliography}{10}
\providecommand{\bibitemdeclare}[2]{}
\providecommand{\surnamestart}{}
\providecommand{\surnameend}{}
\providecommand{\urlprefix}{Available at }
\providecommand{\url}[1]{\texttt{#1}}
\providecommand{\href}[2]{\texttt{#2}}
\providecommand{\urlalt}[2]{\href{#1}{#2}}
\providecommand{\doi}[1]{doi:\urlalt{http://dx.doi.org/#1}{#1}}
\providecommand{\bibinfo}[2]{#2}

\bibitemdeclare{article}{Aceto94}
\bibitem{Aceto94}
\bibinfo{author}{L.~\surnamestart Aceto\surnameend} (\bibinfo{year}{1994}):
  \emph{\bibinfo{title}{A Static View of Localities}}.
\newblock {\sl \bibinfo{journal}{Formal Aspects of Computing}}
  \bibinfo{volume}{6}(\bibinfo{number}{2}), pp. \bibinfo{pages}{201--222},
  \doi{10.1007/BF01221099}.

\bibitemdeclare{incollection}{AcetoEtAl07}
\bibitem{AcetoEtAl07}
\bibinfo{author}{L.~\surnamestart Aceto\surnameend},
  \bibinfo{author}{A.~\surnamestart Ing\'{o}lfsd\'{o}ttir\surnameend},
  \bibinfo{author}{K.~G. \surnamestart Larsen\surnameend} \&
  \bibinfo{author}{J.~\surnamestart Srba\surnameend} (\bibinfo{year}{2007}):
  \emph{\bibinfo{title}{Modelling Mutual Exclusion Algorithms}}.
\newblock In: {\sl \bibinfo{booktitle}{Reactive Systems: Modelling,
  Specification and Verification}}, \bibinfo{publisher}{Cambridge University
  Press}, pp. \bibinfo{pages}{142--158}, \doi{10.1017/CBO9780511814105.008}.

\bibitemdeclare{inproceedings}{BV93}
\bibitem{BV93}
\bibinfo{author}{J.~C.~M. \surnamestart Baeten\surnameend} \&
  \bibinfo{author}{C.~\surnamestart Verhoef\surnameend} (\bibinfo{year}{1993}):
  \emph{\bibinfo{title}{A Congruence Theorem for Structured Operational
  Semantics with Predicates}}.
\newblock In \bibinfo{editor}{E.~\surnamestart Best\surnameend}, editor: {\sl
  \bibinfo{booktitle}{{\rm Proc.} {CONCUR} '93}}, {\sl \bibinfo{series}{\rm
  LNCS}} \bibinfo{volume}{715}, \bibinfo{publisher}{Springer}, pp.
  \bibinfo{pages}{477--492}, \doi{10.1007/3-540-57208-2\_33}.

\bibitemdeclare{inproceedings}{Bergstra88}
\bibitem{Bergstra88}
\bibinfo{author}{J.~A. \surnamestart Bergstra\surnameend}
  (\bibinfo{year}{1988}): \emph{\bibinfo{title}{{ACP} with Signals}}.
\newblock In \bibinfo{editor}{J.~\surnamestart Grabowski\surnameend},
  \bibinfo{editor}{P.~\surnamestart Lescanne\surnameend} \&
  \bibinfo{editor}{W.~\surnamestart Wechler\surnameend}, editors: {\sl
  \bibinfo{booktitle}{{\rm Proc.\ Int.\ Workshop on} Algebraic and Logic
  Programming}}, {\sl \bibinfo{series}{\rm LNCS}} \bibinfo{volume}{343},
  \bibinfo{publisher}{Springer}, pp. \bibinfo{pages}{11--20},
  \doi{10.1007/3-540-50667-5\_53}.

\bibitemdeclare{techreport}{Bouali91}
\bibitem{Bouali91}
\bibinfo{author}{A.~\surnamestart Bouali\surnameend} (\bibinfo{year}{1992}):
  \emph{\bibinfo{title}{Weak and Branching Bisimulation in {Fctool}}}.
\newblock \bibinfo{type}{Research Report} \bibinfo{number}{RR-1575},
  \bibinfo{institution}{Inria-Sophia Antipolis}.
\newblock \urlprefix\url{https://hal.inria.fr/inria-00074985/document}.

\bibitemdeclare{article}{BCHK93}
\bibitem{BCHK93}
\bibinfo{author}{G.~\surnamestart Boudol\surnameend},
  \bibinfo{author}{I.~\surnamestart Castellani\surnameend},
  \bibinfo{author}{M.~\surnamestart Hennessy\surnameend} \&
  \bibinfo{author}{A.~\surnamestart Kiehn\surnameend} (\bibinfo{year}{1993}):
  \emph{\bibinfo{title}{Observing Localities}}.
\newblock {\sl \bibinfo{journal}{Theoretical Computer Science}}
  \bibinfo{volume}{114}(\bibinfo{number}{1}), pp. \bibinfo{pages}{31--61},
  \doi{10.1016/0304-3975(93)90152-J}.

\bibitemdeclare{article}{BCHK94}
\bibitem{BCHK94}
\bibinfo{author}{G.~\surnamestart Boudol\surnameend},
  \bibinfo{author}{I.~\surnamestart Castellani\surnameend},
  \bibinfo{author}{M.~\surnamestart Hennessy\surnameend} \&
  \bibinfo{author}{A.~\surnamestart Kiehn\surnameend} (\bibinfo{year}{1994}):
  \emph{\bibinfo{title}{A Theory of Processes with Localities}}.
\newblock {\sl \bibinfo{journal}{Formal Aspects of Computing}}
  \bibinfo{volume}{6}(\bibinfo{number}{2}), pp. \bibinfo{pages}{165--200},
  \doi{10.1007/BF01221098}.

\bibitemdeclare{inproceedings}{EPTCS54.4}
\bibitem{EPTCS54.4}
\bibinfo{author}{F.~\surnamestart Buti\surnameend},
  \bibinfo{author}{M.~\surnamestart Callisto De~Donato\surnameend},
  \bibinfo{author}{F.~\surnamestart Corradini\surnameend},
  \bibinfo{author}{M.~R. \surnamestart Di~Berardini\surnameend} \&
  \bibinfo{author}{W.~\surnamestart Vogler\surnameend} (\bibinfo{year}{2011}):
  \emph{\bibinfo{title}{Automated Analysis of {MUTEX} Algorithms with {FASE}}}.
\newblock In \bibinfo{editor}{G.~\surnamestart D'Agostino\surnameend} \&
  \bibinfo{editor}{S.~\surnamestart La~Torre\surnameend}, editors: {\sl
  \bibinfo{booktitle}{{\rm Proc.} GandALF '11}}, {\sl \bibinfo{series}{\rm
  EPTCS}}~\bibinfo{volume}{54}, \bibinfo{publisher}{Open Publishing
  Association}, pp. \bibinfo{pages}{45--59}, \doi{10.4204/EPTCS.54.4}.

\bibitemdeclare{article}{CorradiniEtAl09}
\bibitem{CorradiniEtAl09}
\bibinfo{author}{F.~\surnamestart Corradini\surnameend}, \bibinfo{author}{M.~R.
  \surnamestart Di~Berardini\surnameend} \& \bibinfo{author}{W.~\surnamestart
  Vogler\surnameend} (\bibinfo{year}{2009}): \emph{\bibinfo{title}{Liveness of
  a Mutex Algorithm in a Fair Process Algebra}}.
\newblock {\sl \bibinfo{journal}{Acta Informatica}}
  \bibinfo{volume}{46}(\bibinfo{number}{3}), pp. \bibinfo{pages}{209--235},
  \doi{10.1007/s00236-009-0092-9}.

\bibitemdeclare{inproceedings}{CDV09}
\bibitem{CDV09}
\bibinfo{author}{F.~\surnamestart Corradini\surnameend}, \bibinfo{author}{M.~R.
  \surnamestart {Di Berardini}\surnameend} \& \bibinfo{author}{W.~\surnamestart
  Vogler\surnameend} (\bibinfo{year}{2009}): \emph{\bibinfo{title}{Time and
  Fairness in a Process Algebra with Non-blocking Reading}}.
\newblock In \bibinfo{editor}{M.~\surnamestart Nielsen\surnameend},
  \bibinfo{editor}{A.~\surnamestart Kucera\surnameend}, \bibinfo{editor}{P.~Bro
  \surnamestart Miltersen\surnameend}, \bibinfo{editor}{C.~\surnamestart
  Palamidessi\surnameend}, \bibinfo{editor}{P.~\surnamestart Tuma\surnameend}
  \& \bibinfo{editor}{F.~D. \surnamestart Valencia\surnameend}, editors: {\sl
  \bibinfo{booktitle}{Theory and Practice of Computer Science ({SOFSEM}'09)}},
  {\sl \bibinfo{series}{\rm LNCS}} \bibinfo{volume}{5404},
  \bibinfo{publisher}{Springer}, pp. \bibinfo{pages}{193--204},
  \doi{10.1007/978-3-540-95891-8\_20}.

\bibitemdeclare{inproceedings}{CDV11}
\bibitem{CDV11}
\bibinfo{author}{F.~\surnamestart Corradini\surnameend}, \bibinfo{author}{M.~R.
  \surnamestart {Di Berardini}\surnameend} \& \bibinfo{author}{W.~\surnamestart
  Vogler\surnameend} (\bibinfo{year}{2011}): \emph{\bibinfo{title}{Read
  Operators and their Expressiveness in Process Algebras}}.
\newblock In \bibinfo{editor}{B.~\surnamestart Luttik\surnameend} \&
  \bibinfo{editor}{F.~\surnamestart Valencia\surnameend}, editors: {\sl
  \bibinfo{booktitle}{{\rm Proc.} EXPRESS '11}}, {\sl \bibinfo{series}{\rm
  EPTCS}}~\bibinfo{volume}{64}, \bibinfo{publisher}{Open Publishing
  Association}, pp. \bibinfo{pages}{31--43}, \doi{10.4204/EPTCS.64.3}.

\bibitemdeclare{article}{CVJ02}
\bibitem{CVJ02}
\bibinfo{author}{F.~\surnamestart Corradini\surnameend},
  \bibinfo{author}{W.~\surnamestart Vogler\surnameend} \&
  \bibinfo{author}{L.~\surnamestart Jenner\surnameend} (\bibinfo{year}{2002}):
  \emph{\bibinfo{title}{Comparing the worst-case efficiency of asynchronous
  systems with {PAFAS}}}.
\newblock {\sl \bibinfo{journal}{Acta Informatica}}
  \bibinfo{volume}{38}(\bibinfo{number}{11/12}), pp. \bibinfo{pages}{735--792},
  \doi{10.1007/s00236-002-0094-3}.

\bibitemdeclare{unpublished}{EWD35}
\bibitem{EWD35}
\bibinfo{author}{E.~W. \surnamestart Dijkstra\surnameend} (\bibinfo{year}{1962
  or 1963}): \emph{\bibinfo{title}{Over de Sequentialiteit van
  Procesbeschrijvingen}}.
\newblock \urlprefix\url{http://www.cs.utexas.edu/users/EWD/ewd00xx/EWD35.PDF}.
\newblock \bibinfo{note}{Circulated privately}.

\bibitemdeclare{article}{Dijkstra65}
\bibitem{Dijkstra65}
\bibinfo{author}{E.~W. \surnamestart Dijkstra\surnameend}
  (\bibinfo{year}{1965}): \emph{\bibinfo{title}{Solution of a problem in
  concurrent programming control}}.
\newblock {\sl \bibinfo{journal}{Communications of the {ACM}}}
  \bibinfo{volume}{8}(\bibinfo{number}{9}), p. \bibinfo{pages}{569},
  \doi{10.1145/365559.365617}.

\bibitemdeclare{incollection}{EWD123}
\bibitem{EWD123}
\bibinfo{author}{E.~W. \surnamestart Dijkstra\surnameend}
  (\bibinfo{year}{1968}): \emph{\bibinfo{title}{Cooperating Sequential
  Processes}}.
\newblock In \bibinfo{editor}{F.~\surnamestart Genuys\surnameend}, editor: {\sl
  \bibinfo{booktitle}{Programming Languages: NATO Advanced Study Institute}},
  \bibinfo{publisher}{Academic Press}, pp. \bibinfo{pages}{43--112}.

\bibitemdeclare{article}{EsparzaBruns96}
\bibitem{EsparzaBruns96}
\bibinfo{author}{J.~\surnamestart Esparza\surnameend} \&
  \bibinfo{author}{G.~\surnamestart Bruns\surnameend} (\bibinfo{year}{1996}):
  \emph{\bibinfo{title}{Trapping Mutual Exclusion in the Box Calculus}}.
\newblock {\sl \bibinfo{journal}{Theoretical Computer Science}}
  \bibinfo{volume}{153}(\bibinfo{number}{1-2}), pp. \bibinfo{pages}{95--128},
  \doi{10.1016/0304-3975(95)00119-0}.

\bibitemdeclare{techreport}{TR13}
\bibitem{TR13}
\bibinfo{author}{A.~\surnamestart Fehnker\surnameend}, \bibinfo{author}{R.~J.
  \surnamestart van Glabbeek\surnameend}, \bibinfo{author}{P~\surnamestart
  H{\"o}fner\surnameend}, \bibinfo{author}{A.~K. \surnamestart
  McIver\surnameend}, \bibinfo{author}{M.~\surnamestart Portmann\surnameend} \&
  \bibinfo{author}{W.~L. \surnamestart Tan\surnameend} (\bibinfo{year}{2013}):
  \emph{\bibinfo{title}{A Process Algebra for Wireless Mesh Networks used for
  Modelling, Verifying and Analysing {AODV}}}.
\newblock \bibinfo{type}{Technical Report} \bibinfo{number}{5513},
  \bibinfo{institution}{NICTA}.
\newblock \bibinfo{note}{Available at \url{http://arxiv.org/abs/1312.7645}}.

\bibitemdeclare{book}{Fr86}
\bibitem{Fr86}
\bibinfo{author}{N.~\surnamestart Francez\surnameend} (\bibinfo{year}{1986}):
  \emph{\bibinfo{title}{Fairness}}.
\newblock \bibinfo{publisher}{Springer}, \doi{10.1007/978-1-4612-4886-6}.

\bibitemdeclare{incollection}{vG00}
\bibitem{vG00}
\bibinfo{author}{R.~J. \surnamestart van Glabbeek\surnameend}
  (\bibinfo{year}{2011}): \emph{\bibinfo{title}{Bisimulation}}.
\newblock In \bibinfo{editor}{D.~\surnamestart Padua\surnameend}, editor: {\sl
  \bibinfo{booktitle}{Encyclopedia of Parallel Computing}},
  \bibinfo{publisher}{Springer}, pp. \bibinfo{pages}{136--139},
  \doi{10.1007/978-0-387-09766-4\_149}.

\bibitemdeclare{article}{GH15}
\bibitem{GH15}
\bibinfo{author}{R.~J. \surnamestart van Glabbeek\surnameend} \&
  \bibinfo{author}{P.~\surnamestart H\"ofner\surnameend}
  (\bibinfo{year}{2015}): \emph{\bibinfo{title}{{CCS}: It's not Fair!---Fair
  Schedulers Cannot be Implemented in {CCS}-like Languages Even Under Progress
  and Certain Fairness Assumptions}}.
\newblock {\sl \bibinfo{journal}{Acta Informatica}}
  \bibinfo{volume}{52}(\bibinfo{number}{2--3}), pp. \bibinfo{pages}{175--205},
  \doi{10.1007/s00236-015-0221-6}.

\bibitemdeclare{article}{GH14}
\bibitem{GH14}
\bibinfo{author}{R.~J. \surnamestart van Glabbeek\surnameend} \&
  \bibinfo{author}{P.~\surnamestart H{\"o}fner\surnameend}
  (\bibinfo{year}{2015}): \emph{\bibinfo{title}{Progress, Fairness and Justness
  in Process Algebra}}.
\newblock {\sl \bibinfo{journal}{CoRR}} \bibinfo{volume}{{\normalfont
  abs/1501.03268}}.
\newblock \urlprefix\url{http://arxiv.org/abs/1501.03268}.

\bibitemdeclare{article}{KW97}
\bibitem{KW97}
\bibinfo{author}{E.~\surnamestart Kindler\surnameend} \&
  \bibinfo{author}{R.~\surnamestart Walter\surnameend} (\bibinfo{year}{1997}):
  \emph{\bibinfo{title}{Mutex Needs Fairness}}.
\newblock {\sl \bibinfo{journal}{Information Processing Letters}}
  \bibinfo{volume}{62}(\bibinfo{number}{1}), pp. \bibinfo{pages}{31--39},
  \doi{10.1016/S0020-0190(97)00033-1}.

\bibitemdeclare{article}{Kleinrock64}
\bibitem{Kleinrock64}
\bibinfo{author}{L.~\surnamestart Kleinrock\surnameend} (\bibinfo{year}{1964}):
  \emph{\bibinfo{title}{Analysis of A Time-Shared Processor}}.
\newblock {\sl \bibinfo{journal}{Naval Research Logistics Quarterly}}
  \bibinfo{volume}{11}(\bibinfo{number}{1}), pp. \bibinfo{pages}{59--73},
  \doi{10.1002/nav.3800110105}.

\bibitemdeclare{article}{Knuth66}
\bibitem{Knuth66}
\bibinfo{author}{D.~E. \surnamestart Knuth\surnameend} (\bibinfo{year}{1966}):
  \emph{\bibinfo{title}{Additional comments on a problem in concurrent
  programming control}}.
\newblock {\sl \bibinfo{journal}{Communications of the {ACM}}}
  \bibinfo{volume}{9}(\bibinfo{number}{5}), pp. \bibinfo{pages}{321--322},
  \doi{10.1145/355592.365595}.

\bibitemdeclare{article}{bakery}
\bibitem{bakery}
\bibinfo{author}{L.~\surnamestart Lamport\surnameend} (\bibinfo{year}{1974}):
  \emph{\bibinfo{title}{A New Solution of Dijkstra's Concurrent Programming
  Problem}}.
\newblock {\sl \bibinfo{journal}{Communications of the {ACM}}}
  \bibinfo{volume}{17}(\bibinfo{number}{8}), pp. \bibinfo{pages}{453--455},
  \doi{10.1145/361082.361093}.

\bibitemdeclare{book}{Mi89}
\bibitem{Mi89}
\bibinfo{author}{R.~\surnamestart Milner\surnameend} (\bibinfo{year}{1989}):
  \emph{\bibinfo{title}{Communication and Concurrency}}.
\newblock \bibinfo{publisher}{Prentice Hall}.

\bibitemdeclare{inproceedings}{MN92}
\bibitem{MN92}
\bibinfo{author}{M.~\surnamestart Mukund\surnameend} \&
  \bibinfo{author}{M.~\surnamestart Nielsen\surnameend} (\bibinfo{year}{1992}):
  \emph{\bibinfo{title}{CCS, Locations and Asynchronous Transition Systems}}.
\newblock In \bibinfo{editor}{R.~K. \surnamestart Shyamasundar\surnameend},
  editor: {\sl \bibinfo{booktitle}{{\rm Proc.} FSTTCS '92}}, {\sl
  \bibinfo{series}{\rm LNCS}} \bibinfo{volume}{652},
  \bibinfo{publisher}{Springer}, pp. \bibinfo{pages}{328--341},
  \doi{10.1007/3-540-56287-7\_116}.

\bibitemdeclare{misc}{rfc970}
\bibitem{rfc970}
\bibinfo{author}{J.~\surnamestart Nagle\surnameend} (\bibinfo{year}{1985}):
  \emph{\bibinfo{title}{On Packet Switches with Infinite Storage}}.
\newblock \bibinfo{howpublished}{RFC 970, Network Working Group}.
\newblock \urlprefix\url{http://tools.ietf.org/rfc/rfc970.txt}.

\bibitemdeclare{article}{Nagle87}
\bibitem{Nagle87}
\bibinfo{author}{J.~\surnamestart Nagle\surnameend} (\bibinfo{year}{1987}):
  \emph{\bibinfo{title}{On Packet Switches with Infinite Storage}}.
\newblock {\sl \bibinfo{journal}{IEEE Trans. Communications}}
  \bibinfo{volume}{35}(\bibinfo{number}{4}), pp. \bibinfo{pages}{435--438},
  \doi{10.1109/TCOM.1987.1096782}.

\bibitemdeclare{article}{OL82}
\bibitem{OL82}
\bibinfo{author}{S.~S. \surnamestart Owicki\surnameend} \&
  \bibinfo{author}{L.~\surnamestart Lamport\surnameend} (\bibinfo{year}{1982}):
  \emph{\bibinfo{title}{Proving Liveness Properties of Concurrent Programs}}.
\newblock {\sl \bibinfo{journal}{{ACM} TOPLAS}}
  \bibinfo{volume}{4}(\bibinfo{number}{3}), pp. \bibinfo{pages}{455--495},
  \doi{10.1145/357172.357178}.

\bibitemdeclare{article}{Peterson81}
\bibitem{Peterson81}
\bibinfo{author}{G.~L. \surnamestart Peterson\surnameend}
  (\bibinfo{year}{1981}): \emph{\bibinfo{title}{Myths About the Mutual
  Exclusion Problem}}.
\newblock {\sl \bibinfo{journal}{Information Processing Letters}}
  \bibinfo{volume}{12}(\bibinfo{number}{3}), pp. \bibinfo{pages}{115--116},
  \doi{10.1016/0020-0190(81)90106-X}.

\bibitemdeclare{inproceedings}{Valmari96}
\bibitem{Valmari96}
\bibinfo{author}{A.~\surnamestart Valmari\surnameend} \&
  \bibinfo{author}{M.~\surnamestart Set\"{a}l\"{a}\surnameend}
  (\bibinfo{year}{1996}): \emph{\bibinfo{title}{Visual Verification of Safety
  and Liveness}}.
\newblock In \bibinfo{editor}{{M.-C.} \surnamestart Gaudel\surnameend} \&
  \bibinfo{editor}{J.~\surnamestart Woodcock\surnameend}, editors: {\sl
  \bibinfo{booktitle}{Industrial Benefit and Advances in Formal Methods
  (FME'96)}}, {\sl \bibinfo{series}{\rm LNCS}} \bibinfo{volume}{1051},
  \bibinfo{publisher}{Springer}, pp. \bibinfo{pages}{228--247},
  \doi{10.1007/3-540-60973-3\_90}.

\bibitemdeclare{article}{Vogler02}
\bibitem{Vogler02}
\bibinfo{author}{W.~\surnamestart Vogler\surnameend} (\bibinfo{year}{2002}):
  \emph{\bibinfo{title}{Efficiency of asynchronous systems, read arcs, and the
  {MUTEX}-problem}}.
\newblock {\sl \bibinfo{journal}{Theoretical Computer Science}}
  \bibinfo{volume}{275}(\bibinfo{number}{1-2}), pp. \bibinfo{pages}{589--631},
  \doi{10.1016/S0304-3975(01)00300-0}.

\bibitemdeclare{article}{Walker89}
\bibitem{Walker89}
\bibinfo{author}{D.~J. \surnamestart Walker\surnameend} (\bibinfo{year}{1989}):
  \emph{\bibinfo{title}{Automated analysis of mutual exclusion algorithms using
  {CCS}}}.
\newblock {\sl \bibinfo{journal}{Formal Aspects of Computing}}
  \bibinfo{volume}{1}(\bibinfo{number}{1}), pp. \bibinfo{pages}{273--292},
  \doi{10.1007/BF01887209}.

\end{thebibliography}
\end{document}